\documentclass[letterpaper, 10 pt, conference]{ieeeconf}  

\IEEEoverridecommandlockouts                              
\usepackage{graphics} 
\usepackage{epsfig} 
\usepackage{times} 
\usepackage{amsmath} 
\usepackage{amssymb}  

\usepackage{graphicx}
\usepackage{mathtools}
\usepackage{amsfonts}
\usepackage{xcolor}
\usepackage{soul}
\usepackage{subcaption}
\usepackage{bm}
\usepackage{algorithm}
\usepackage{algpseudocode}
\usepackage{cite}

\def\RR{\mathbb{R}}
\def\SS{\mathbb{S}}
\def\BB{\mathbb{B}}
\def\setA{\mathcal{A}}
\def\calA{\mathcal{A}}
\def\calB{\mathcal{B}}
\def\calC{\mathcal{C}}
\def\calD{\mathcal{D}}
\def\calK{\mathcal{K}}
\def\calL{\mathcal{L}}

\def\tt{^{\mathsf{\scriptscriptstyle T}}}
\def\x{{\rm x}}
\def\aa{{\rm a}}
\def\a{{\rm a}}
\def\ua{{u_\aa}}

\def\z{{\rm z}}
\def\e{{\rm e}}
\def\calS{\mathcal{S}}
\def\g{{\rm g}}
\def\v{{\rm v}}
\def\w{{\rm w}}
\def\str{^{\star}}

\DeclareMathOperator{\signof}{sign}

\newcommand{\bbm}[1]{\begin{bmatrix} #1 \end{bmatrix}}

\newcommand{\matl}[1]{\begin{matrix*}[l] #1 \end{matrix*}}

\newcommand\norm[1]{\left\lVert#1\right\rVert}

\newcommand{\dintt}[4]{\int_{#1}^{#2}#3\,\mathrm{d}#4}

\newtheorem{theorem}{Theorem}
\newtheorem{lemma}{Lemma}
\newtheorem{corollary}{Corollary}
\newtheorem{definition}{Definition}
\newtheorem{remark}{Remark}
\newtheorem{assumption}{Assumption}

\definecolor{myblue}{RGB}{0,144,178}
\definecolor{myyellow}{RGB}{230,159,0}
\definecolor{mygreen}{RGB}{0,158,115}
\definecolor{myred}{RGB}{213,94,0}
\definecolor{mymaroon}{RGB}{140,29,64}
\definecolor{gravitygreen}{RGB}{19,137,31}
\definecolor{velocityblue}{RGB}{28,93,207}
\definecolor{thrustred}{RGB}{179,28,53}
\definecolor{myhighlight}{RGB}{255,230,150}

\sethlcolor{myhighlight}

\newcommand{\saveforjournal}[1]{}

\title{\LARGE \bf
Practical Complete Tracking With Pivoted Unidirectional Actuation%
}

\author{Ian J. Willebeek-LeMair and Craig A. Woolsey
\thanks{This work was supported in part by The Boeing Company.
}
\thanks{I. Willebeek-LeMair is a Ph.D. student in the Kevin T. Crofton Department of Aerospace and Ocean Engineering, 
        Virginia Tech, Blacksburg, VA 24061, USA
        {\texttt{\small ianwl@vt.edu}}}%
\thanks{C. Woolsey is a Professor in the Kevin T. Crofton Department of Aerospace and Ocean Engineering, 
        Virginia Tech, Blacksburg, VA 24061, USA
        {\texttt{\small cwoolsey@vt.edu}}}%
}

\begin{document}

\maketitle
\thispagestyle{empty}
\pagestyle{empty}

\begin{abstract}

This paper addresses practical tracking control for robotic vehicles with pivoted unidirectional actuators.  A widely recognized obstacle to tracking control for multirotors is the free-fall singularity --- when commanded to free-fall, the vehicle's attitude target becomes undefined.  This singularity is a pathology pertinent to all robotic vehicles with pivoted and gimbaled unidirectional actuators.  This paper addresses the planar instance of the general theoretical problem: the case of a vehicle with pivoted unidirectional actuation.  Starting from a baseline robust controller that assumes unconstrained inputs, we redesign the control law to be compatible with the pivoted actuator.  This is accomplished by driving the output of the pivoted actuator to a ball centered at the target input value.  The baseline controller's guarantees are recovered in a practical sense.  The theory is illustrated with a simulation example.  

\end{abstract}

\section{INTRODUCTION} \label{sec:Introduction}

The theoretical problem addressed in this paper is motivated by the applied problem of position and velocity tracking for a class of robotic vehicles that includes multirotor unmanned aerial vehicles (UAVs). State-of-the-art nonlinear controllers for multirotors are often designed using backstepping-based approaches that leverage the geometric relationship between a multirotor's orientation and the direction of its thrust vector to synthesize a target attitude from a desired thrust history~\cite{lee_geometric_2010,lee_control_2011,lee_geometric_2011,lee_nonlinear_2013,gamagedara_geometric_2019,willebeek_lemair_input_state_2025,willebeek_lemair_robust_2025,flores_robust_2025,naldi_robust_2017,martins_global_2021,martins_global_2024}.  This target attitude becomes undefined whenever the desired thrust goes through zero, i.e. when the desired motion is free-fall, because the vehicle can free-fall in any orientation; see Fig.~\ref{fig:state_of_the_art}.  The free-fall singularity obstructs the development of robust and adaptive controllers for all robotic vehicles that employ pivoted or gimbaled unidirectional actuators.  

Recent multirotor control literature addresses the free-fall singularity by either assuming it is never encountered~\cite{lee_geometric_2010,lee_control_2011,lee_geometric_2011,lee_nonlinear_2013,gamagedara_geometric_2019,willebeek_lemair_input_state_2025,willebeek_lemair_robust_2025,flores_robust_2025} or excluding any desired position history with a vertical acceleration faster than gravity~\cite{naldi_robust_2017,martins_global_2021,martins_global_2024}. These assumptions and exclusions limit theoretical results.   For example, assuming the free-fall singularity is never encountered complicates input-to-state stability (ISS) analysis: a disturbance could push the vehicle through the singularity, causing a jump discontinuity in the ISS-Lyapunov function. Similarly, limiting allowable acceleration to be less than gravity precludes ISS results since disturbances might overpower the controller.  

Gimbaled and pivoted unidirectional actuators are often treated as rate-unlimited inputs, even though actuator rotation takes time. Standard backstepping approaches account for the pivot subsystem in the same manner as is used for multirotors and so suffer the same pathology: a desired actuation that goes through zero can demand a discontinuous jump in pivot direction. Inspired by the multirotor scenario, we continue to refer to this singularity as the \textit{free-fall singularity}.  This paper describes a controller redesign that overcomes the singularity and enables tracking without artificial actuation limits.  

\begin{figure}[t!]
    \centering
    \begin{minipage}{0.4865\textwidth}
        \centering
        \includegraphics[trim={0.0cm 0cm 0.0cm -0.4cm},clip,width=0.99\textwidth]{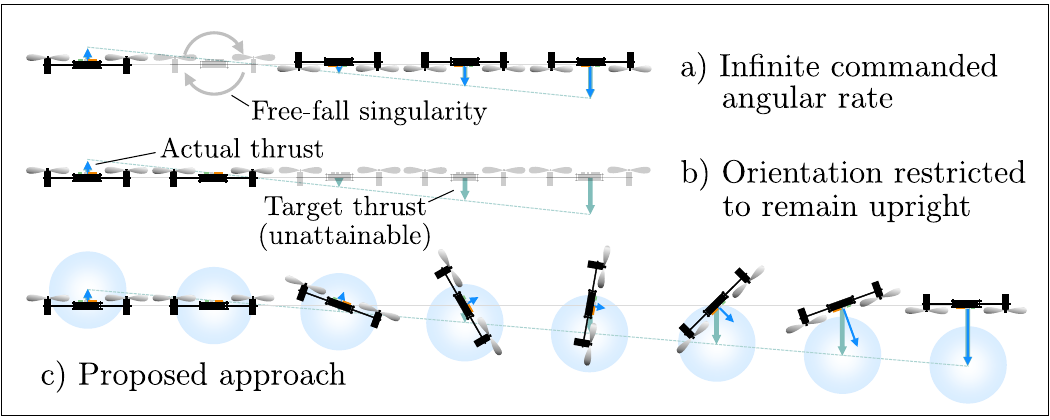}
        \caption{Three multirotor command tracking scenarios.}
        \label{fig:state_of_the_art}
    \end{minipage}
    \vspace{-0.25in}
\end{figure}

Many robotic vehicles employ or themselves are gimbaled or pivoted unidirectional actuators.  Ocean vessels that employ azimuthal pivots to direct thrust about a vertical axis are one example.  Vehicles that reorient themselves to direct a body-fixed thruster can be considered gimbaled or pivoted unidirectional actuators, as well.  Examples include multirotor UAVs, rockets, hovercraft that steer with differential thrust, and wheeled ground vehicles constrained to forward motion~\cite{lee_geometric_2010,ballaben_towards_2025,bahati_control_2025}.  In general, a system with a pivoted or gimbaled unidirectional actuator is one where a unidirectional control input enters the system as the coefficient of a unit vector whose direction and angular velocity are a controllable integrator cascade. For brevity, this paper exclusively examines the lowest dimensional case of interest: a pivoted unidirectional actuator; i.e., one for which the unidirectional actuation can be continuously rotated within a plane.  (A one-dimensional unit vector cannot change direction continuously.)  Extensions to higher dimensional gimbaled cases can be constructed using the proposed solution to the pivoted case. 

In our analysis, we partition a robotic vehicle system into an outer-loop vehicle subsystem whose state may include position, velocity, or both and an inner-loop pivot subsystem whose state is the pivot's direction and angular velocity.  We assume that the pivoted actuator can exert arbitrarily large positive actuation, that an input-to-state stabilizing exact complete tracking (ECT) controller is available for the outer-loop system, and that the ECT controller's design regards the pivoted actuator as an unconstrained input.  The controller is then redesigned to be compatible with the constrained motion of a pivot and to enable practical complete tracking (PCT) for the outer-loop vehicle system. Exact versus practical complete tracking is clarified in Definition~\ref{def:ECT_and_PCT}.  

There are two primary contributions in this paper.  The first contribution is an example of a smooth outer-loop trajectory that cannot be followed exactly by vehicles with pivoted unidirectional actuators, proving by contradiction that no ECT controller exists for this class of systems.  The second and main contribution is the synthesis of a PCT controller that ensures the control objective is satisfied. 

The paper is organized as follows.  Section~\ref{sec:problem_formulation} details the class of systems considered, states the objective, and proves no ECT controllers exist.  Section~\ref{sec:controller_design} includes the synthesis of the PCT controller and proofs of its properties.  Section~\ref{sec:Simulation} provides an example.  Section~\ref{sec:Conclusions} states conclusions.  

The following notation is used throughout the paper.  Let $I$ denote an identity matrix, $0$ denote a matrix of zeros, and $\e_i$ denote a unit vector with a one in its $i$\textsuperscript{th} entry. The dimensions of $I$, $0$, and $\e_i$ are implied by context.  Let $\calS$ denote the skew-symmetric rotation matrix $\calS=\e_2 \e_1\tt-\e_1 \e_2\tt\in\RR^{2\times2}$. Let $\norm{\cdot}$ denote the Euclidean norm.  Let $\BB^n$ denote the unit ball in $\RR^n$ with respect to $\|\cdot\|$. For a vector $x\in\RR^n$ and a set $A\subseteq\RR^n$, let $\|x\|_A=\inf_{\z\in A}\|x-\z\|$.  Let $\bar A$, $\partial A$, and $\mathring{A}$ denote the closure, the boundary, and the interior of $A$, respectively.  Let $\bar\RR = \RR\cup\{-\infty,\infty\}$ denote the extended real number line.  For any $f:\RR^n\to\RR^m$, let $f(A)=\{v\in\RR^m|v=f(a),a\in A\}$.  Let $M\otimes N$ denote the Kronecker product of $M$ and $N$.  Finally, let $C^r(\mathbb{X},\mathbb{Y})$ and ${\rm PWC}(\mathbb{X},\mathbb{Y})$ denote, respectively, the set of $r$-times continuously differentiable functions and the set of piecewise continuous functions from $\mathbb{X}$ to $\mathbb{Y}$.

\section{PROBLEM FORMULATION} \label{sec:problem_formulation}

Consider the dynamical system \vspace{-0.125cm}
\begin{subequations} \label{eq:olsys}
    \begin{align}
            \dot x(t) &= f(t, x(t)) + \g(t, x(t))\lambda(t)u_1(t)
            \label{eq:olsys_outer_loop_system}
            \\
            \dot\lambda(t) &= \calS \lambda(t) \omega(t) 
            \label{eq:olsys_pivot_pose}
            \\
            \dot\omega(t) &= u_2(t) 
            \label{eq:olsys_pivot_rate} 
    \end{align}
\end{subequations}
where $ x\in C^{0}(\RR_{\geq0},\RR^n)$ is referred to as the state of the outer-loop system, $f\in C^{\infty}(\RR_{\geq0}\times\RR^n,\RR^n)$ is the smooth drift vector field, $\g\in C^{\infty}(\RR_{\geq0}\times\RR^n,\RR^{n\times2})$ is the smooth actuation matrix which is assumed to be bounded and uniformly full column rank, $\lambda\in C^1(\RR_{\geq0},\SS^1)$ is the direction of the pivoted unidirectional actuator, $\omega\in C^0(\RR_{\geq0},\RR)$ is the angular velocity of the pivot, $u_1\in {\rm PWC}(\RR_{\geq0},\RR_{\geq0})$ is the unidirectional input to the outer loop, and $u_2\in {\rm PWC}(\RR_{\geq0},\RR)$ is the angular acceleration input to the pivot. 

In this paper,~\eqref{eq:olsys} is interpreted as the interconnection of an affine-in-control outer-loop subsystem~\eqref{eq:olsys_outer_loop_system} with an inner-loop pivot subsystem~\eqref{eq:olsys_pivot_pose},~\eqref{eq:olsys_pivot_rate}.  The effective input to the outer-loop system~\eqref{eq:olsys_outer_loop_system} is \vspace{-0.17cm}
\begin{equation} \label{eq:defn_of_gimbal_thrust} 
    \aa(t) = \lambda(t)u_1(t) \in \RR^2,
    \vspace{-0.17cm}
\end{equation}
which we refer to as the ``actuation'' or ``thrust'' applied by the pivot.  The control-affine structure of~\eqref{eq:olsys_outer_loop_system} and the requirements that $\g$ be bounded and uniformly full column rank are important; they hold for all the vehicle systems mentioned in Section~\ref{sec:Introduction}.  The requirements on $\g$ ensure that the sensitivity of $\dot x(t)$ to changes in $\aa(t)$ is always both uniformly bounded and nonzero. Boundedness implies small changes in $\aa(t)$ yield proportionally small changes in $\dot x(t)$, and being nonzero implies that $\aa(t)$ can always influence $\dot x(t)$.  

To develop control laws for~\eqref{eq:olsys}, actuation $\aa(t) = \lambda(t)u_1(t)$ is often replaced with an unconstrained input $u_\aa(t)\in\RR^2$ which leads to the fictitious surrogate system\vspace{-0.2cm}
\begin{equation}
    \dot x(t) = f(t, x(t)) + \g(t, x(t))\ua(t) + \w(t)
    \label{eq:surrogate_olsys} \vspace{-0.175cm}
\end{equation}
where $\ua:\RR_{\geq0}\to\RR^2$ is the (fictitiously) unconstrained actuation input and $\w:\RR_{\geq0}\to\RR^n$ is a bounded disturbance.  The system~\eqref{eq:surrogate_olsys} is a surrogate for the entirety of~\eqref{eq:olsys} (not just~\eqref{eq:olsys_outer_loop_system}).  The mock input $u_\aa(t)$ suffers none of the consequences of the constraints $\lambda(t)\in\SS^1$ and $u_1(t)\geq0$.  These effects are instead captured by $\w(t)$.

Let $X\str$ denote the set of undisturbed ($\w\equiv0$) solutions of~\eqref{eq:surrogate_olsys} arising from initial conditions $x(0)\in\RR^n$ and smooth input histories $u_\aa \in C^\infty(\RR_{\geq0},\RR^2)$.  Then for any smooth desired trajectory $ x\str\in X\str$,~\eqref{eq:surrogate_olsys} can be rewritten \vspace{-0.15cm}
\begin{equation}
    \dot e(t) = f_{ x\str}(t,e(t)) + \g_{ x\str}(t,e(t))u_\aa(t) + \w(t)
    \label{eq:error_surrogate_olsys} \vspace{-0.15cm}
\end{equation}
where $e(t)= x(t)- x\str(t)$, $f_{ x\str}(t,e(t))=f(t,e(t)+ x\str(t))-\dot x\str(t)$, and $\g_{ x\str}(t,e(t))=\g(t,e(t)+ x\str(t))$.

\begin{assumption}
\label{assmp:ISS_control_law}
    Consider the fictitious system~\eqref{eq:error_surrogate_olsys}. Assume there exists a known control law $\kappa_{ x\str}\in C^{\infty}(\RR_{\geq0}\times\RR^n,\RR^2)$ such that $u_\aa(t)=\kappa_{ x\str}(t,e(t))$ input-to-state stabilizes~\eqref{eq:error_surrogate_olsys} with respect to the disturbance $\w$, uniformly in $X\str$.  Then there exist $\beta\in\calK\calL$ and $\gamma\in\calK_{\infty}$ such that \vspace{-0.17cm}
    \begin{equation}
        \|e(t)\| \leq \beta(\|e(t_0)\|,t-t_0) + \gamma({\sup}_{\tau\geq t_0}\|\w(\tau)\|) \label{eq:outer_clsys_is_pre_ISS}
        \vspace{-0.15cm}
    \end{equation}
    for all $ x\str\in X\str$, for each $\w\in L^{\infty}(\RR_{\geq0},\RR^n)$.  
\end{assumption}

The objective in this paper is to synthesize a control law for the inputs $u_1$ and $u_2$ that enables the outer-loop system~\eqref{eq:olsys_outer_loop_system} to track any user-supplied smooth outer-loop trajectory in a practical sense.  

\begin{definition} \label{def:ECT_and_PCT}
    Suppose there exists a controller, parameterized by $x\str$, that ensures \vspace{-0.17cm}
    \begin{equation}
        \limsup{}_{t\to\infty}\| x(t)- x\str(t)\| \leq \Delta
        \label{eq:control_objective} \vspace{-0.15cm}
    \end{equation}
    for every $x\str\in X\str$, where $x(t)$ is a solution of~\eqref{eq:olsys_outer_loop_system}.  If~\eqref{eq:control_objective} holds with $\Delta=0$ and $x(0) = x\str(0)$ implies $x \equiv x\str$, then the controller is called an \textit{exact complete tracking} (ECT) \textit{controller}.  If~\eqref{eq:control_objective} holds with $\Delta>0$ it is called a \textit{practical complete tracking} (PCT) \textit{controller}.  (The choice of terminology is discussed in Remarks~\ref{remark:complete} and~\ref{remark:practical}.)
\end{definition}

The control objective is to redesign the ISS ECT controller $\kappa_{ x\str}$ for~\eqref{eq:surrogate_olsys} into a PCT controller for~\eqref{eq:olsys}.  The motivation to synthesize a PCT controller is that ECT controllers for~\eqref{eq:olsys} do not exist.  This fact follows from the existence of $x\str\in X\str$ that are not outer-loop trajectories of~\eqref{eq:olsys}, as formalized and discussed in the following theorem and remark. 

\begin{theorem} \label{thrm:no_ECT}
    There exist no ECT controllers for~\eqref{eq:olsys}.
\end{theorem}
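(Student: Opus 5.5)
We will argue by contradiction, exhibiting a single smooth $x\str\in X\str$ that no solution of~\eqref{eq:olsys} can reproduce exactly. Suppose an ECT controller exists. By Definition~\ref{def:ECT_and_PCT}, if $x(0)=x\str(0)$ then $x\equiv x\str$, so $\dot x(t)=\dot x\str(t)$ for all $t\ge0$. Subtracting the two outer-loop equations gives
\begin{equation*}
    \g(t,x\str(t))\big(\lambda(t)u_1(t)-\ua(t)\big)=0 .
\end{equation*}
Because $\g$ is uniformly full column rank, this forces $\lambda(t)u_1(t)=\ua(t)$ for (almost) all $t$, where $\ua$ is the smooth input that generates $x\str$. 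Any obstruction to realizing a smooth $\ua$ in the form $\lambda u_1$ with $\lambda\in C^1(\RR_{\geq0},\SS^1)$, $u_1\ge0$ piecewise continuous, and $\dot\lambda=\calS\lambda\omega$ with $\omega$ continuous therefore rules out ECT.

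The witness will be an actuation that passes through zero and reverses direction, e.g. $\ua(t)=(t-1)\e_1$. Take any $x(0)$ and let $x\str$ be the undisturbed solution of~\eqref{eq:surrogate_olsys} driven by this $\ua$; it is smooth and lies in $X\str$. On $[0,1)$ the identity $\lambda u_1=\ua$ with $u_1\ge0$ gives $u_1(t)=\|\ua(t)\|=1-t>0$ and $\lambda(t)=-\e_1$. On $(1,2]$ it gives $\lambda(t)=\e_1$. Since $\lambda$ is required to be continuous, taking limits at $t=1$ yields $-\e_1=\lambda(1)=\e_1$, a contradiction.

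The main care point is that the identity $\lambda u_1=\ua$ may fail on the measure-zero set of discontinuities of the piecewise continuous $u_1$. Piecewise continuity means $u_1$ has only finitely many discontinuities on $[0,2]$. Hence $\lambda u_1=\ua$ holds on an open set of full measure in $[0,2]$; this follows by integrating $\dot x=\dot x\str$, or directly from continuity of both sides away from the jumps. Because $\ua\neq0$ there, $\lambda=\ua/\|\ua\|$ on a dense subset of $[0,1)$ and of $(1,2]$. Continuity of $\lambda$ then extends the values $-\e_1$ and $\e_1$ to the whole of each interval, and the contradiction at $t=1$ stands.

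As an aside, the contradiction needs only continuity of $\lambda$; the bounded $\omega$ is not used.
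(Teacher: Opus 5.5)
Your proof is correct and is essentially the paper's argument. The paper uses the witness $\ua(t)=(1-t)\e_2$, invokes the full column rank of $\g$ to force $\lambda(t)u_1(t)=\ua(t)$, and concludes that $\lambda$ must jump from $\e_2$ to $-\e_2$ at $t=1$, contradicting $\lambda\in C^1$. Apart from a sign/axis change in the witness, your one departure is the density argument for the finitely many discontinuities of the piecewise continuous $u_1$, which the paper glosses over by asserting the identity for all $t$.
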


\begin{proof}
    ECT requires the implication $x(0)= x\str(0) \Rightarrow x \equiv x\str$ for any $x\str\in X\str$.  If there exists a trajectory $x\str$ of~\eqref{eq:surrogate_olsys} for which this implication is impossible, then ECT is likewise impossible.  To prove the theorem, we provide such a trajectory of~\eqref{eq:surrogate_olsys} that is not an outer-loop trajectory of~\eqref{eq:olsys}. 
    
    Denote by $x\str$ a trajectory of~\eqref{eq:surrogate_olsys} that results from an arbitrary initial condition $x\str(0)\in\RR^n$ and the input history $u_\aa(t)=(1-t)\e_2$.  Consider a trajectory $x$ of~\eqref{eq:olsys} with $x(0)=x\str(0)$.  Since $\g$ is uniformly full column rank, $x\equiv x\str$ only if $\lambda(t)u_1(t) = u_\aa(t)$ for all $t\in\RR_{\geq0}$.  The pivot direction consistent with $u_\aa(t)$ must therefore satisfy \vspace{-0.17cm}
    \begin{equation}
        \lambda(t) = \e_2 \,\,\,\, \forall \, t\in[0,1), \quad \lambda(t) = -\e_2  \,\,\,\, \forall \, t\in(1,\infty)
        \label{eq:no_ECT} \vspace{-0.17cm}
    \end{equation}
    since $u_1(t)\geq0$.  Hence, $x \equiv x\str$ requires a jump discontinuity in $\lambda(t)$ at $t=1$ which contradicts $\lambda\in C^1$.  Therefore there exist no input histories $u_1,u_2$ for which $x \equiv x\str$, and so ECT is impossible.
\end{proof}

\begin{remark} \label{remark:complete}
    The use of ``complete'' in Definition~\ref{def:ECT_and_PCT} is motivated by the following observation.  Let $X$ denote the set of $x$-histories of~\eqref{eq:olsys} arising from smooth input histories.  A corollary to Theorem~\ref{thrm:no_ECT} is that $X$ is a proper subset of $X\str$.  Compared to $X\str$, the set $X$ has \textit{holes} such as the example output history provided in the proof of Theorem~\ref{thrm:no_ECT}.  ECT and PCT address reference histories drawn from the \textit{complete} set $X\str$ rather than its strict subset $X$. 
\end{remark}

\begin{remark} \label{remark:practical}
    The use of the term  ``practical'' in Definition~\ref{def:ECT_and_PCT} is consistent with \textit{input-to-state practical stability} (ISpS) where it indicates a nonzero ultimate bound~\cite{lin_various_1995}.
\end{remark}

\section{CONTROLLER DESIGN} \label{sec:controller_design}

This section details the design of the PCT control law for~\eqref{eq:olsys} with an inner-loop/outer-loop architecture based on backstepping.  Supplied with a desired outer-loop trajectory $ x\str(t)$, the outer loop provides the inner loop with a desired thrust history $\aa\str(t)=\kappa_{ x\str}(t,e(t))$.  Through backstepping, the inner loop drives $\aa(t)=\lambda(t)u_1(t)$ to the ball \vspace{-0.15cm}
\begin{equation}
    \setA(t)=\aa\str(t)+r \bar\BB^2 \vspace{-0.17cm}
\end{equation}
for some $r>0$, which is the tolerance for error in the commanded thrust vector. The set-based objective $\aa(t)\to\setA(t)$ is the core of the control design.  The difference $\aa(t)-\aa\str(t)$ is then interpreted as a disturbance to the outer-loop system and practical stabilization results follow.  The inner-loop design is described in Section~\ref{subsec:inner_loop} and the long-term behavior of the composite system is discussed in Section~\ref{subsec:long_term}.

\subsection{Inner-loop Design} \label{subsec:inner_loop}

\subsubsection{Motivating the Set-Based Approach} \label{subsubsec:why_set_based}
The thrust command $\aa\str(t)=0$ is special for a pivoted actuator.  Whenever $\aa\str(t)$ is \textit{nonzero}, there is one particular direction \vspace{-0.17cm}
\begin{equation}
    \lambda\str(t) = \frac{\aa\str(t)}{\|\aa\str(t)\|} \vspace{-0.17cm}
\end{equation}
from which the pivoted actuator can supply the target thrust.  For $\aa\str(t)=0$, in contrast, there is no unique target direction $\lambda\str(t)$ of $\lambda(t)$ --- the target $\aa\str(t)=0$ can be attained with $\lambda(t)$ pointed in \textit{any} direction.  The geometric fact that there is a \textit{set} of admissible directions $\lambda(t)$ when $\aa\str(t)=0$ motivates the set-based control objective $\aa(t)\to\setA(t)$. 

Unlike the single thrust direction $\lambda\str(t)$, the set $\setA(t)$ is defined for all values of the thrust command $\aa\str(t)$.  In particular, the ball $\setA(t)$ is well defined and centered at the origin when $\aa\str(t)\!=\!0$.  When $\|\aa\str(t)\|$ is large relative to $r$, $\aa(t)\!\in\!\setA(t)$ implies the difference between $\aa(t)$ and $\aa\str(t)$ is small relative to $\|\aa\str(t)\|$, recovering the performance of a controller designed to drive $\lambda(t)\to\lambda\str(t)$.  At the expense of inexact tracking, this set-based approach enables the controller to become independent of $\lambda\str(t)$ when $\aa\str(t)$ is in a bounded neighborhood of zero.

There are several other hurdles the inner-loop controller must overcome.  First, the topological obstruction associated with control around $\SS^1$ must be addressed~\cite{liberzon_switching_2003}.  Whenever $\lambda(t)=-\lambda\str(t)$, the control law must decide which direction to turn, given that turning in either direction would reduce the error.  To ensure the resulting control law is smooth, the obstruction is addressed by introducing a 1-dimensional analogue of modified Rodrigues parameters (MRPs) to represent the attitude of the pivot.  One-dimensional (1D) MRPs are a double covering of $\SS^1$ --- on one of the coverings, $\lambda(t)=-\lambda\str(t)$ is resolved by turning clockwise; on the other, it is resolved by turning counterclockwise.  Relying on 1D MRPs introduces another hurdle: unwinding.  Unwinding is a behavior where the vehicle turns nearly a full revolution to achieve a target attitude when a shorter turn in the opposite direction would achieve the same target attitude.  To address unwinding, the control law is initialized in a manner that discourages unwinding. 

\subsubsection{Specifying a Sufficient Target Attitude Set} \label{subsubsec:target_theta_set}
At each moment, $\setA(t)$ implicitly defines a set of admissible thrust directions $\lambda(t)$ for which there exists $u_1(t)$ such that \vspace{-0.17cm}
\begin{equation}
    \aa(t)=u_1(t)\lambda(t)\in\setA(t)
    \label{eq:geometry_0} \vspace{-0.17cm}
\end{equation}
is possible. Let $\theta(t)$ denote the signed angle from $\lambda\str(t)$ to $\lambda(t)$ whenever $\lambda\str(t)$ is defined, \vspace{-0.17cm}
\begin{equation}
    \theta(t) = \arctan_2(\lambda(t)\tt\calS\lambda\str(t),\lambda(t)\tt\lambda\str(t)).
    \label{eq:defn_of_theta} \vspace{-0.17cm}
\end{equation}
Broadly speaking, if $\lambda\str(t)$ is defined then $\lambda(t)$ is an admissible thrust direction if $|\theta(t)|$ is sufficiently small, and if $\lambda\str(t)$ is not defined then $\lambda(t)$ (and all other directions) are admissible since $\calA(t)$ encloses the origin.

For simplicity and with a significant loss of generality to be addressed in future work, prescribe \vspace{-0.17cm}
\begin{equation}
    u_1(t) = \|\a\str(t)\|. \label{eq:u1_prescription} \vspace{-0.17cm}
\end{equation}
The aim is now to identify a sufficient condition for \vspace{-0.17cm}
\begin{equation}
    \a(t)=\|\a\str(t)\|\lambda(t)\in\setA(t) \label{eq:set_membership_objective} \vspace{-0.17cm}
\end{equation}
in terms of $\theta(t)$.

To emphasize the role played by $\theta(t)$ in~\eqref{eq:set_membership_objective}, consider a coordinate frame where $\lambda\str(t)$ is aligned to the horizontal.  In this coordinate frame,~\eqref{eq:set_membership_objective} becomes \vspace{-0.17cm}
\begin{equation}
    \|\a\str(t)\| \bbm{\cos\theta(t) \\ \sin\theta(t)} \in \|\a\str(t)\| \, \e_1 + r\bar\BB^2. \label{eq:geometry_1} \vspace{-0.17cm}
\end{equation}
That is, \vspace{-0.17cm}
\begin{equation}
    \hspace{-0.1cm}\theta(t)\in\left\{\vartheta\in\bar\RR \left| \|\a\str(t)\| \bbm{\cos\vartheta \\ \sin\vartheta} \in \|\a\str(t)\| \, \e_1 + r\bar\BB^2 \right.\right\}
    \label{eq:overly_general_angle_condition} \vspace{-0.17cm}
\end{equation}
is sufficient for~\eqref{eq:set_membership_objective}.  Towards further simplifying~\eqref{eq:overly_general_angle_condition}, the set on the right hand side is rewritten as an interval through the following construction.  Let \vspace{-0.17cm}
\begin{equation}
    \calC(t) = \left\{\|\a\str(t)\| \bbm{\cos\vartheta \\ \sin\vartheta} \left|  \, \vartheta\in\bar\RR \right.\right\}
    \label{eq:defn_of_C} \vspace{-0.17cm}
\end{equation}
and \vspace{-0.17cm}
\begin{equation}
    \calD(t)=\|\a\str(t)\| \, \e_1 + r\bar\BB^2. \vspace{-0.17cm}
\end{equation}
Note $\|\a\str(t)\| [\cos\theta(t) \, \sin\theta(t)]\tt\in\calC(t)$ by construction and that~\eqref{eq:geometry_1} is equivalent to $\|\a\str(t)\| [\cos\theta(t) \, \sin\theta(t)]\tt\in\calD(t)$.
The intersection of the circle $\calC(t)$ and disk $\calD(t)$, which is connected and symmetric about the horizontal, corresponds to the angles $\theta(t)$ for which~\eqref{eq:set_membership_objective} holds.  These two sets are graphed in Figure~\ref{fig:circle_schematic} for several values of $\|\a\str(t)\|$.

\begin{figure}[t!]
    \centering
    \begin{minipage}{0.5\textwidth}
        \begin{subfigure}[t]{0.45\textwidth}
            \centering
            \includegraphics[trim={0.0cm 0cm 0.0cm -0.4cm},clip,width=0.95\textwidth]{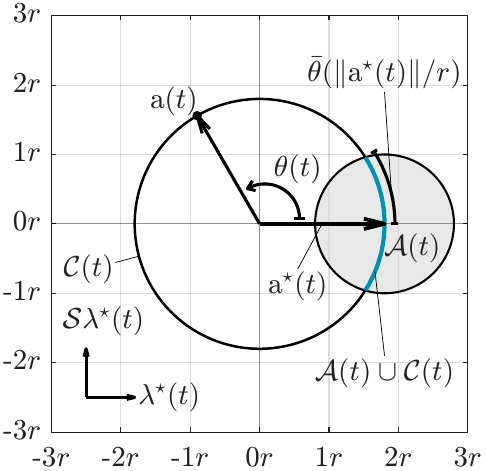}
            \caption{$\|\a\str(t)\|=1.8r$}
            \label{fig:circles_a}
        \end{subfigure}%
        ~
        \begin{subfigure}[t]{0.45\textwidth}
            \centering
            \includegraphics[trim={0.0cm 0cm 0.0cm -0.4cm},clip,width=0.95\textwidth]{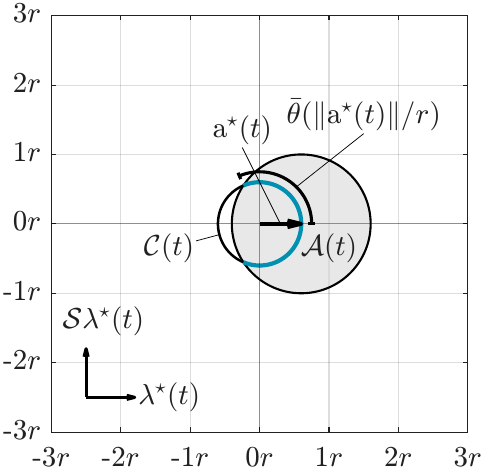}
            \caption{$\|\a\str(t)\|=0.6r$}
            \label{fig:circles_b}
        \end{subfigure}
    \end{minipage}
    \begin{minipage}{0.5\textwidth}
        \begin{subfigure}[t]{0.45\textwidth}
            \centering
            \includegraphics[width=0.95\textwidth]{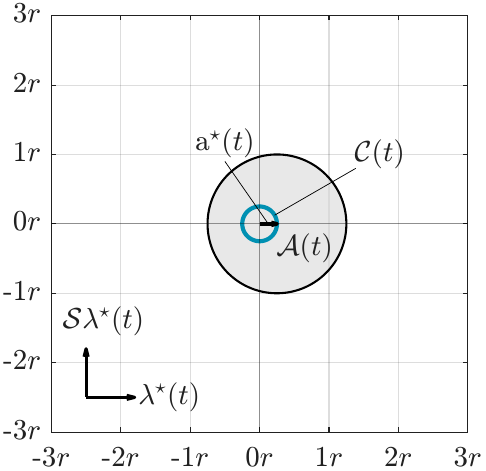}
            \caption{$\|\a\str(t)\|=0.25r$}
            \label{fig:circles_c}
        \end{subfigure}%
        ~
        \begin{subfigure}[t]{0.45\textwidth}
            \centering
            \includegraphics[width=0.95\textwidth]{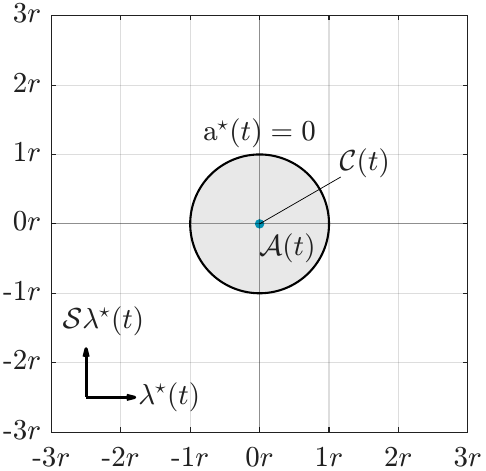}
            \caption{$\|\a\str(t)\|=0$}
            \label{fig:circles_d}
        \end{subfigure}
    \end{minipage}
    \caption{Illustrations of $\calA(t)$ (equivalently $\calD(t)$), $\calC(t)$, and their intersection, drawn with $\lambda\str(t)$ aligned to the horizontal.}
    \label{fig:circle_schematic}
    \vspace{-0.4cm}
\end{figure}

Consider the intersection of $\calC(t)$ with $\partial\calD(t)$.  
The number of intersections depends on the ratio of $\|\a\str(t)\|$ to $r$.  The circles intersect at two points when $\|\a\str(t)\|>\tfrac{1}{2}r$, at one point when $\|\a\str(t)\|=\tfrac{1}{2}r$, and at no points when $\|\a\str(t)\|<\tfrac{1}{2}r$ (because $\calC(t)$ lies strictly within the interior of $\calD(t)$).  When $\calC(t)$ and $\partial\calD(t)$ intersect, the unsigned angle from the horizontal to either line running through the origin and a point of intersection is given by \vspace{-0.17cm}
\begin{equation}
    \arccos\Bigl(1-\frac{1}{2}\frac{r^2}{\|\a\str(t)\|^2}\Bigr). \vspace{-0.17cm}
\end{equation}
This angle can be used to construct a sufficient condition for~\eqref{eq:set_membership_objective}.

\begin{lemma} \label{lemma:theta_bar}
    Define $\bar\theta:\RR_{\geq0}\to\bar\RR$ by 
    \vspace{-0.17cm}
    \begin{equation}
        \bar\theta(\sigma) = \left\{ \matl{\infty, & \sigma \leq 1/2 \\ \arccos(1-\tfrac{1}{2}\tfrac{1}{\sigma^2}), & \sigma > 1/2} \right.. \vspace{-0.17cm}
        \label{eq:defn_of_thetabar}
    \end{equation}
    Then~\eqref{eq:set_membership_objective} holds if and only if \vspace{-0.17cm}
    \begin{equation}
        \theta(t)\in \bar\Theta(t) \coloneqq [-\bar\theta(\|\a\str(t)\|/r), \bar\theta(\|\a\str(t)\|/r)].  \label{eq:geometry_2} \vspace{0.2cm}
    \end{equation}
\end{lemma}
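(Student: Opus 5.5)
Reduce the set-membership condition~\eqref{eq:set_membership_objective} to a scalar inequality in $\theta(t)$ and solve it. Write $s=\|\a\str(t)\|$ and work in the rotated frame of~\eqref{eq:geometry_1}, which is legitimate because rotations preserve the Euclidean norm. Then~\eqref{eq:set_membership_objective} is equivalent to
\[
\bigl\| s[\cos\theta(t)\ \sin\theta(t)]\tt - s\,\e_1\bigr\| \le r .
\]
Expanding the square gives $s^2\bigl((\cos\theta-1)^2+\sin^2\theta\bigr)=2s^2(1-\cos\theta)$. So the condition is exactly
\[
2s^2\bigl(1-\cos\theta(t)\bigr)\le r^2 .
\]
This is the key reduction, and everything after it is bookkeeping on this single inequality.

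Next split into cases on $\sigma=s/r$. If $s=0$, which covers the free-fall case where $\lambda\str$ is undefined, the inequality holds trivially: $\aa(t)=0$ lies in $\setA(t)=r\bar\BB^2$ for any $\lambda(t)$. This matches $\bar\theta(0)=\infty$, so $\bar\Theta(t)=\bar\RR$. I would note explicitly that in this case $\theta$ is interpreted as arbitrary. If $0<\sigma\le 1/2$, then $2s^2(1-\cos\theta)\le 4s^2\le r^2$ for every $\theta$. So the condition holds for all $\theta$, again matching $\bar\Theta(t)=[-\infty,\infty]$. If $\sigma>1/2$, divide by $2s^2>0$ to get
\[
\cos\theta(t)\ge 1-\tfrac{1}{2}\sigma^{-2} .
\]
Here $1-\tfrac12\sigma^{-2}\in(-1,1)$, so $\bar\theta(\sigma)=\arccos(1-\tfrac12\sigma^{-2})\in(0,\pi)$ is well defined.

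The remaining step is to convert the cosine inequality into the interval statement, which is the main point of care. Since $\theta(t)$ is defined through $\arctan_2$, it lies in $(-\pi,\pi]$. On that range $\cos$ is even, strictly decreasing in $|\theta|$ on $[0,\pi]$, and $\arccos$ is its inverse there. Hence $\cos\theta\ge\cos\bar\theta$ holds if and only if $|\theta|\le\bar\theta$, that is, $\theta(t)\in\bar\Theta(t)$. Both directions of the equivalence follow because each step above is an equivalence. I expect the only subtlety to be this range restriction on $\theta$ together with the convention for $\bar\RR$-valued endpoints in the first two cases; I would state both explicitly so the ``if and only if'' is airtight.
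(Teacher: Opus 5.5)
Your proof is correct, and it takes a different route from the paper's. The paper argues geometrically with the circle $\calC(t)$ and the disk $\calD(t)$. It takes two facts from the discussion before the lemma: $\calC(t)\subseteq\calD(t)$ when $\|\a\str(t)\|\le r/2$, and otherwise $\calC(t)\cap\calD(t)$ is exactly the arc $\|\a\str(t)\|[\cos\bar\Theta(t)\ \sin\bar\Theta(t)]\tt$. For the converse direction it then concludes $\theta(t)+2\pi k\in\bar\Theta(t)$ for some integer $k$ and uses $\theta(t)\in[-\pi,\pi)$ to remove the shift.

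You instead reduce membership to the single scalar inequality $2\|\a\str(t)\|^2(1-\cos\theta(t))\le r^2$. From it you read off both the $r/2$ threshold and the $\arccos$ bound, and you replace the $2\pi k$ step by monotonicity of $\cos$ on $[0,\pi]$. Your version is self-contained: it derives the intersection angle and the connectedness and symmetry of the admissible arc, which the paper asserts from the figure without proof. The paper's version has the advantage of tying directly to the $\calC$/$\calD$ picture used for motivation.

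Two small remarks. Your range $(-\pi,\pi]$ differs from the paper's $[-\pi,\pi)$; this is harmless because $\bar\theta<\pi$ whenever $\sigma>1/2$, so the endpoints are excluded on both sides. Also, you do not need the rotated frame at all. Since $\lambda(t)\tt\lambda\str(t)=\cos\theta(t)$ by the definition of $\theta(t)$, you get directly $\bigl\|\,\|\a\str(t)\|\lambda(t)-\a\str(t)\bigr\|^2=2\|\a\str(t)\|^2\bigl(1-\lambda(t)\tt\lambda\str(t)\bigr)$.
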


\begin{proof}
    Suppose that~\eqref{eq:geometry_2} holds.  If $\|\a\str(t)\|\leq\tfrac{r}{2}$, then $\calC(t)\cap\calD(t)=\calC(t)$.  Hence, $\|\a\str(t)\| [\cos\theta(t) \, \sin\theta(t)]\tt\in\calC(t)=\calC(t)\cap\calD(t)\subset\calD(t)$ implies~\eqref{eq:set_membership_objective}.  If $\|\a\str(t)\|>\tfrac{r}{2}$, \vspace{-0.12cm}
    \begin{equation}
        \hspace{-0.08cm}
        \|\a\str(t)\| \!\! \bbm{\cos\theta(t) \\ \sin\theta(t)} \!\! \in \! \|\a\str(t)\| \!\! \bbm{\cos\bar\Theta(t) \\ \sin\bar\Theta(t)} \!\! 
        \subset \calD(t), \vspace{-0.05cm}
    \end{equation}
    so $\a(t) =\|\a\str(t)\| \lambda(t)\in\calA(t)$ after rotating back to the original coordinates.  

    Suppose that~\eqref{eq:set_membership_objective} holds.  If $\|\a\str(t)\|\leq\tfrac{r}{2}$, then $\bar\Theta(t)=\bar\RR$ so $\theta(t)\in\bar\Theta(t)$ trivially.  If $\|\a\str(t)\|>\tfrac{r}{2}$, then $\bar\Theta(t)$ is the set of angles corresponding to $\calC(t)\cap\calD(t)$.   From~\eqref{eq:geometry_1},~\eqref{eq:set_membership_objective} is equivalent to $\|\a\str(t)\| [\cos\theta(t) \, \sin\theta(t)]\in\calD(t)$.  In addition, $\|\a\str(t)\| [\cos\theta(t) \, \sin\theta(t)]\in\calC(t)$ from~\eqref{eq:defn_of_C}.  Therefore, \vspace{-0.17cm}
    \begin{equation}
        \hspace{-0.1cm}
        \|\a\str(t)\| \! \bbm{\cos\theta(t) \\ \sin\theta(t)} \in \calC(t)\cap\calD(t) = \|\a\str(t)\| \! \bbm{\cos\bar\Theta(t) \\ \sin\bar\Theta(t)}\!.
        \label{eq:lands_in_sliver} \vspace{-0.17cm}
    \end{equation}
    so $\theta(t) + 2\pi k\in\bar\Theta(t)$ for some $k\in\mathbb{Z}$.  Recall that the definition~\eqref{eq:defn_of_theta} ensures $\theta(t)\in[-\pi,\pi)$.  Pairing this fact with~\eqref{eq:lands_in_sliver} results in~\eqref{eq:geometry_2}.  
\end{proof}

While~\eqref{eq:geometry_2} is sufficient (and necessary) for~\eqref{eq:set_membership_objective}, the angle $\bar\theta(\|\a\str(t)\|/r)$
does not vary smoothly with $\|\a\str(t)\|$ and so cannot be used in a backstepping design.  Towards enabling the use of backstepping, a smooth underestimate $\theta\str(\cdot)$ of $\bar\theta(\cdot)$ is introduced.  Referring to Figure~\ref{fig:angular_bounds}, note that $\theta\str(\sigma)\leq\bar\theta(\sigma)$ for all $\sigma\in\bar\RR$ if and only if $\theta\str(s^{-1})\leq\bar\theta(s^{-1})$ for all $s\in\bar\RR$.  It is advantageous to design $s\mapsto\theta\str(s^{-1})$ rather than $\sigma\mapsto\theta\str(\sigma)$ since this choice enables the use of the identity $\arccos(1-\frac{s^2}{2})\geq|s|$ for all $s\in[-1,1]$.  It is also advantageous to design $\theta\str(\cdot)$ so that $\theta\str(\sigma)=2\pi$ for all $\sigma$ in a neighborhood of zero.  Hence, $\theta\str(\cdot)$ is selected to be \vspace{-0.17cm}
\begin{equation}
    \theta\str(s^{-1}) = |s| \bigl(1-\zeta(|s|;2,2\pi)\bigr) + 2\pi\,\zeta(|s|;2,2\pi)
    \label{eq:wacky_defn_of_theta_star}
    \vspace{-0.17cm}
\end{equation}
where $\zeta$ is a smooth step function defined in Appendix~\ref{appsec:zeta}. The function $\zeta(|s|;2,2\pi)$ rises from zero at $|s|=2$ to one at $|s|=2\pi$.  Equation~\eqref{eq:wacky_defn_of_theta_star} can be equivalently rewritten as \vspace{-0.17cm}
\begin{equation}
    \theta\str(\sigma) = |\sigma|^{-1} \! \bigl(1\!-\!\zeta(|\sigma|^{-1};2,2\pi)\bigr) + 2\pi\,\zeta(|\sigma|^{-1};2,2\pi) \label{eq:defn_of_theta_star}
    \vspace{-0.17cm}
\end{equation}
when $\sigma\neq0$ with $\theta\str(0)=2\pi$.  By construction, $\theta\str(\cdot)$ is smooth and $\theta\str(\sigma) \leq \bar\theta(\sigma)$ for all $\sigma\in\RR$.  This underestimate provides a corollary to Lemma~\ref{lemma:theta_bar} which is the desired sufficient condition for~\eqref{eq:set_membership_objective}.

\begin{corollary} \label{corollary:theta_star}
    Define $\Theta\str(\cdot)$ by \vspace{-0.17cm}
    \begin{equation}
        \label{eq:defn_of_Theta_star}
        \Theta\str(t) = [-\theta\str(\|\a\str(t)\|/r), \theta\str(\|\a\str(t)\|/r)]. \vspace{-0.17cm}
    \end{equation}
    Then, \vspace{-0.17cm}
    \begin{equation}
        \theta(t)\in\Theta\str(t) \quad \Rightarrow \quad \a(t)=\lambda(t)u_1(t)\in\calA(t).
        \vspace{0.1cm}
    \end{equation}
\end{corollary}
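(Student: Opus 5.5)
The plan is to reduce the corollary to Lemma~\ref{lemma:theta_bar} by showing the set inclusion $\Theta\str(t)\subseteq\bar\Theta(t)$. Then $\theta(t)\in\Theta\str(t)$ gives $\theta(t)\in\bar\Theta(t)$, which by Lemma~\ref{lemma:theta_bar} (with $u_1(t)=\|\a\str(t)\|$ from~\eqref{eq:u1_prescription}) is equivalent to $\a(t)=\|\a\str(t)\|\lambda(t)\in\calA(t)$. Both intervals are symmetric about zero, so the inclusion reduces to the scalar inequality $\theta\str(\sigma)\le\bar\theta(\sigma)$ for every $\sigma=\|\a\str(t)\|/r\ge0$. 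I will prove this inequality directly rather than relying on the phrase ``by construction''. The degenerate case where $\lambda\str(t)$ is undefined ($\a\str(t)=0$) is handled separately: there $\calA(t)=r\bar\BB^2\ni 0=\a(t)$ holds trivially.

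For the inequality I split on $\sigma$.

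\emph{Case $\sigma\le 1/2$.} Here $\bar\theta(\sigma)=\infty$ and $\theta\str(\sigma)$ is finite, so there is nothing to check.

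\emph{Case $\sigma>1/2$.} Write $s=\sigma^{-1}\in(0,2)$. Then $\zeta(|s|;2,2\pi)=0$, because $\zeta$ vanishes below its lower breakpoint $2$ (I take this from Appendix~\ref{appsec:zeta}). So~\eqref{eq:wacky_defn_of_theta_star} gives $\theta\str(\sigma)=s$, while $\bar\theta(\sigma)=\arccos(1-s^2/2)$, and it remains to show $\arccos(1-s^2/2)\ge s$ on $(0,2)$.

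The paper quotes this identity only for $s\in[-1,1]$, so I will prove it on the whole interval. On $s\in(0,2)\subset(0,\pi)$, cosine is decreasing on $[0,\pi]$, so the claim is equivalent to $1-s^2/2\le\cos s$. That is the standard bound $\cos s\ge 1-s^2/2$, valid for all real $s$: the function $h(s)=\cos s-1+s^2/2$ satisfies $h(0)=0$ and $h'(s)=s-\sin s\ge0$ for $s\ge0$. One also checks that $1-s^2/2\in(-1,1)$, so the $\arccos$ is defined.

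The main obstacle is this middle range $s\in(1,2)$, i.e.\ $\sigma\in(1/2,1)$, where the quoted identity does not apply. The cosine-monotonicity argument above covers it. I also need that $\zeta$ is exactly zero up to $|s|=2$, not merely small; if the appendix definition only gives $\zeta\in[0,1]$ with a transition on $[2,2\pi]$, that property still holds on $[0,2]$. Once the inequality is in hand, the corollary follows immediately from Lemma~\ref{lemma:theta_bar}.
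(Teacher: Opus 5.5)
Your proposal is correct and follows the paper's proof: show $\Theta\str(t)\subseteq\bar\Theta(t)$ via $\theta\str\le\bar\theta$, then invoke the sufficiency direction of Lemma~\ref{lemma:theta_bar}. The paper only asserts $\theta\str\le\bar\theta$ ``by construction'', so your explicit check fills in a detail it leaves implicit; in particular, using $\cos s\ge 1-s^2/2$ covers $s\in(1,2)$, which lies outside the $[-1,1]$ range of the identity the paper quotes.
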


\begin{proof}
    Note $\theta\str(\sigma) \leq \bar\theta(\sigma)$ for $\sigma\in\RR$ ensures $\Theta\str(t)\subseteq\bar\Theta(t)$.  From Lemma~\ref{lemma:theta_bar}, $\theta(t)\in\Theta\str(t)\subseteq\bar\Theta(t)$ implies~\eqref{eq:set_membership_objective}.  
\end{proof}

\begin{figure}[t!]
    \centering
    \begin{subfigure}{0.5\textwidth}
        \centering
        \includegraphics[trim={0.0cm 0cm 0.0cm -0.4cm},clip,width=0.95\textwidth]{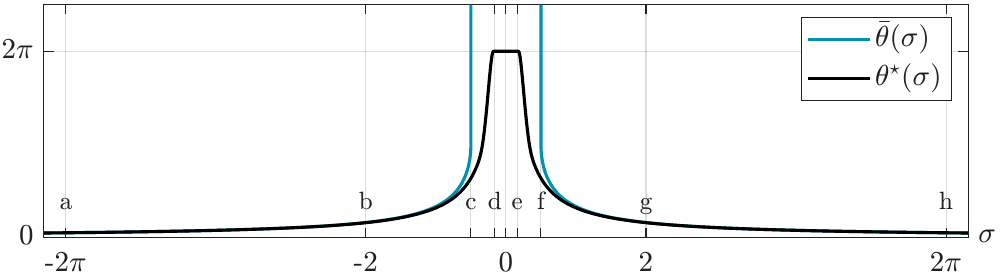}
        \caption{Graphs of $\bar\theta(\sigma)$ and $\theta\str(\sigma)$.}
        \label{fig:thetas_with_identity_arg}
    \end{subfigure}\hfill 
    \begin{subfigure}{0.5\textwidth}
        \centering
        \includegraphics[width=0.95\textwidth]{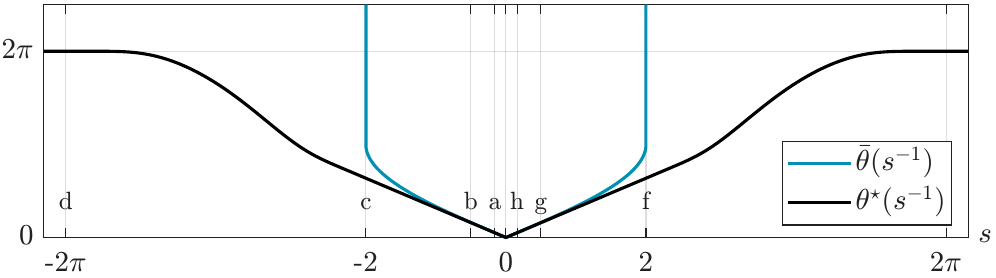}
        \caption{Graphs of $\bar\theta(s^{-1})$ and $\theta\str(s^{-1})$.}
        \label{fig:thetas_with_inverse_arg}
    \end{subfigure}\hfill 
    \caption{Angular bounds.  Correspondence of horizontal stations between (a) and (b) is indicated with matching letters.}
    \label{fig:angular_bounds}
    \vspace{-0.4cm}
\end{figure}

\subsubsection{Introducing 1D MRP Coordinates} \label{subsec:introducing_1D_MRPs!}
To resolve the topological obstruction at $\lambda(t)=-\lambda\str(t)$, an alternative representation of the pivot's attitude is introduced.  Define $\eta:\RR_{\geq0}\to\bar\RR$ by continuity and the inclusion \vspace{-0.17cm}
\begin{equation}
    \eta(t)\in\{\tan(\tfrac{1}{4}\theta(t)),-\cot(\tfrac{1}{4}\theta(t))\}
    \label{eq:defn_of_eta} \vspace{-0.17cm}
\end{equation}
with $\eta(0) = \tan(\tfrac{1}{4}\theta(0))$.  The case where $\theta(0)$ is not defined is addressed by a switching rule presented in Section~\ref{subsec:switching_condition}.  

Using the coordinate $\eta(t)$ to describe the orientation of the pivot is analogous to using MRPs to describe orientation in $SO(3)$.  Accordingly, the $\eta$ coordinates are hereafter referred to as 1D MRP coordinates.  The set of 1D MRPs is $\bar\RR$.  Infinities are included in this set since they are well defined 1D MRPs that correspond to the identity rotation.  The coordinate $\eta$ provides a double covering of $SO(2)$. Rotations that are $180^{\circ}$ away from $\theta(t)=0$ are mapped to the unit circle (which is $\{-1,1\}$ for $\RR^1$).  

Using 1D MRPs resolves the topological obstruction at $\lambda(t)=-\lambda\str(t)$.  If $\lambda(t)=-\lambda\str(t)$ is mapped to $\eta(t)=-1$ then turning counterclockwise drives $\eta(t)\to0$, and if $\lambda(t)=-\lambda\str(t)$ is mapped to $\eta(t)=1$ then turning clockwise drives $\eta(t)\to0$.  Sometimes, it is appropriate to switch the 1D MRP representing the pivot orientation, as explained in Section~\ref{subsec:switching_condition}.  When it is not switching, $\eta(t)$ evolves continuously according to \vspace{-0.17cm}
\begin{equation}
    \dot\eta(t) = \frac{1+\eta(t)^2}{4}\left(\omega(t)-(\calS\lambda\str(t))\tt \frac{\dot\a\str(t)}{\|\a\str(t)\|}\right),
    \label{eq:defn_of_ddt_eta} \vspace{-0.17cm}
\end{equation}
which follows from taking the derivative of~\eqref{eq:defn_of_eta}.

Each 1D MRP can be mapped to a \textit{shadow MRP} that corresponds to the same rotation.  The function that converts between MRPs and shadow MRPs is ${\rm sh}(\eta) = -\eta^{-1}$.  

The 1D MRP counterpart to $\Theta\str(t)$ in~\eqref{eq:defn_of_Theta_star} is the set \vspace{-0.17cm}
\begin{equation}
    N\str(t) = \left[-\tan(\tfrac{1}{4}\theta\str(\tfrac{\|\a\str(t)\|}{r})),\tan(\tfrac{1}{4}\theta\str(\tfrac{\|\a\str(t)\|}{r}))\right].
    \vspace{-0.17cm}
\end{equation}
For later convenience, define \vspace{-0.17cm}
\begin{equation}
    \eta\str(t)=\tan(\tfrac{1}{4}\theta\str(\|\a\str(t)\|/r)),
    \vspace{-0.17cm}
\end{equation}
so that $N\str(t) = [-\eta\str(t),\eta\str(t)]$.

Note that $\eta(t)$ depends on $\theta(t)$ in~\eqref{eq:defn_of_eta} so it becomes undefined if $\a\str(t)=0$.  Accordingly, let $\eta(t)$ be defined only on all time intervals where $\a\str(t)\neq0$ and assume that each of these intervals has nonzero measure.  The fact that $\eta(t)$ can become undefined is not an issue for the upcoming Lyapunov analysis since the Lyapunov function proposed in the next section is designed to be independent of $\eta(t)$ during any such periods.  (This is achieved by expanding $N\str(t)$ to $\bar\RR$ whenever $\a\str(t)=0$.)

\subsubsection{Designing a Lyapunov Function}

To guarantee convergence of $\a(t)=\lambda(t)u_1(t)$ to $\setA(t)$ through smooth feedback, we first construct a Lyapunov function for use in a backstepping control design process of Section~\ref{subsubsec:backstepping}.  MRP-based attitude controllers can, in principle, exhibit unwinding.  This unwinding makes global asymptotic stability (GAS) of $\setA(t)$ impossible --- stability in the sense of Lyapunov fails since it is possible for $\a(t)$ to begin close to $\setA(t)$, move away from the ball as $\eta(t)$ unwinds, and then return as $t\to\infty$.  Instead of proving GAS, the set $\setA(t)$ is shown to be globally attractive.  The proof relies on the development of a candidate Lyapunov function $V_\a(t,\eta)$ for which \vspace{-0.17cm}
\begin{equation}
    \|\a(t)\|_{\setA(t)}\leq \mu\bigl(V_\a(t,\eta(t))\bigr)
    \label{eq:overestimate_goal}
    \vspace{-0.17cm}
\end{equation}
for some $\mu\in\calK_{\infty}$.  

Suppose $\a(t)\not\in\setA(t)$ as depicted in Figure~\ref{fig:arclength_construction}.  Let $\alpha(t)$ denote the point lying at the intersection of $\partial\setA(t)$ with the line connecting $\a(t)$ and $\a\str(t)$. Then $\|\a(t)\|_{\setA(t)} = \|\a(t)-\alpha(t)\|$.  Consider the two points that lie at the intersection of $\partial\setA(t)$ and $\|\a\str(t)\|\partial\BB^2$.  Let $\beta(t)$ denote one of these points. If one is closer to $\a(t)$, take $\beta(t)$ to be that point. Note that $\|\a(t)-\alpha(t)\|\leq\|\a(t)-\beta(t)\|$ since $\alpha(t)$ is the point closest to $\a(t)$ on the boundary of $\setA(t)$.  It is also true that $\|\a(t)-\beta(t)\|$ is less than the length of the minor arc connecting $\beta(t)$ to $\a(t)$ along $\|\a\str(t)\|\partial\BB^2$.  The length of this arc is $\|\a\str(t)\| (|\theta(t)| - \bar\theta(\tfrac{\|\a\str(t)\|}{r}))$.  With the reasoning above, \vspace{-0.17cm}
\begin{equation}
    \|\a(t)\|_{\setA(t)} \leq \|\a\str(t)\| (|\theta(t)| - \theta\str(\tfrac{\|\a\str(t)\|}{r})), \label{eq:intermediate_overestimate}
    \vspace{-0.17cm}
\end{equation}
where $\bar\theta$ has been replaced with its smooth underestimate $\theta\str$.  

\begin{figure}[t!]
    \begin{minipage}{0.5\textwidth}
        \centering
        \includegraphics[trim={0.0cm 1.5cm 0.0cm 1.5cm},clip,width=0.6\textwidth]{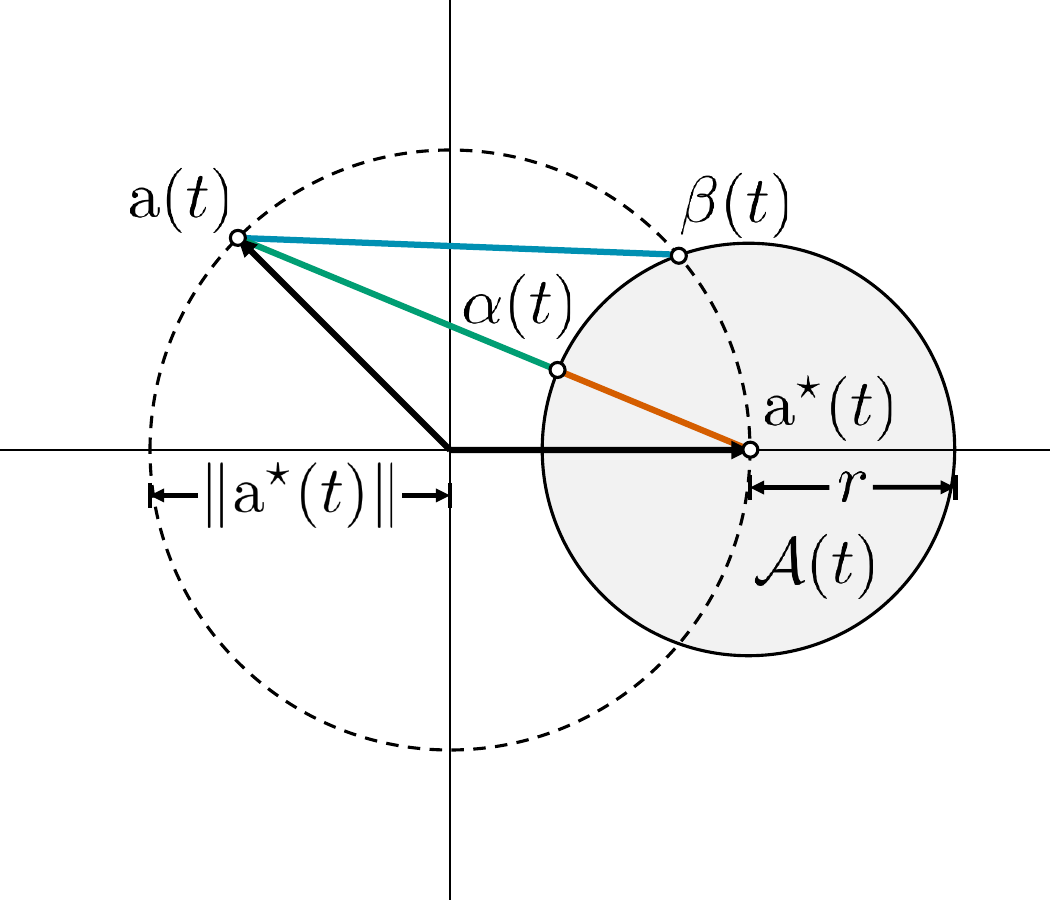}
        \caption{Schematic illustrating $\a(t)\not\in\setA(t)$.\vspace{-0.17cm}}
        \label{fig:arclength_construction}
    \end{minipage}\hfill \vspace{-0.5cm}
\end{figure}

The signed arclength around a unit circle corresponding to a 1D MRP $\eta$ is $4\arctan(\eta)$, which takes values in $[-2\pi,2\pi]$ since the 1D MRPs are a double covering of $\SS^1$.  This expression can be used to rewrite~\eqref{eq:intermediate_overestimate} as \vspace{-0.17cm}
\begin{equation}
    \hspace{-0.17cm}\|\a(t)\|_{\setA(t)} \!\leq\! 4\|\a\str(t)\| \bigl(|\arctan(\eta(t))|\!-\!\arctan(\eta\str(t))\bigr).\!\!  \label{eq:radially_bounded_V_a} \vspace{-0.17cm}
\end{equation}
To ensure $V_{\a}$ is radially unbounded with respect to $\eta$, Lemma~\ref{lemma:arctan_stuff} from the appendix is applied to obtain \vspace{-0.17cm}
\begin{equation}
    \|\a(t)\|_{\setA(t)}\leq 4\|\a\str(t)\| \bigl(|\eta(t)|-\eta\str(t)\bigr). \label{eq:radially_unbounded_V_a} \vspace{-0.17cm}
\end{equation}
Using the fact that $\chi\in\calK_{\infty}\Leftrightarrow\chi^{-1}\in\calK_{\infty}$, the function $\mu\in\calK_{\infty}$ is now defined by its inverse \vspace{-0.17cm}
\begin{equation}
    \mu^{-1}(s) = Z(\tfrac{1}{4}s;0,\Delta_\a) \vspace{-0.17cm}
\end{equation}
for some $\Delta_\a>0$, where $Z$ is a smooth ramp function defined in Appendix~\ref{appsec:zeta} that is parameterized similarly to the smooth step function $\zeta$ introduced earlier. Then, \vspace{-0.17cm}
\begin{align}
    \hspace{-0.17cm}
    \|\a(t)\|_{\setA(t)} &\leq  \mu(\mu^{-1}(4\|\a\str(t)\| \bigl(|\eta(t)|\!-\!\eta\str(t)\bigr))) 
    \\
    \hspace{-0.17cm}
    \Leftrightarrow \|\a(t)\|_{\setA(t)} &\leq \mu\bigl(Z(\|\a\str(t)\| (|\eta(t)|\!-\!\eta\str(t));0,\Delta_\a)\bigr),\\[-0.65cm] \notag
\end{align}
which is the desired bound~\eqref{eq:overestimate_goal} when $V_\a(t,\eta)$ is defined by\vspace{-0.17cm}
\begin{equation}
    V_\a(t,\eta) = Z(\|\a\str(t)\| (|\eta|-\eta\str(t));0,\Delta_\a). \label{eq:defn_of_V_eta} \vspace{-0.17cm}
\end{equation}
With this definition, we are ready to apply backstepping.

\subsubsection{Backstepping Analysis} \label{subsubsec:backstepping}
Consider the candidate Lyapunov function $V_\a(t,\eta)$.  Omitting arguments of time and simplifying, the derivative of $V_\a$ along trajectories is \vspace{-0.17cm}
\begin{equation}
    \begin{aligned}
        &\dot V_\a
        = \zeta(\|\a\str\|(|\eta|-\eta\str);0,\Delta_\a) \Bigl(\lambda\str{}\tt\dot\a\str (|\eta|-\eta\str)
        \\& 
        + \|\a\str\|\Bigl(\signof(\eta)\bigl(\tfrac{1+\eta^2}{4}\bigr)\bigl(\omega -(\calS\lambda\str)\tt \tfrac{\dot\a\str}{\|\a\str\|}\bigr) -\dot \eta\str \Bigr) \Bigr).
    \end{aligned} \label{eq:ddt_V_eta_1} \vspace{-0.17cm}
\end{equation}

Define the target angular velocity $\omega\str:\RR_{\geq0}\times\RR\to\RR$ by \vspace{-0.17cm}
\begin{align}
    \omega\str(t,\eta) 
    &= 
    \biggl(-\lambda\str(t)\tt\Bigl(\frac{4\signof(\eta)}{1+\eta^2}(|\eta|\!-\!\eta\str(t))I \!+\! \calS \Bigr)\frac{\dot\a\str(t)}{\|\a\str(t)\|}   
    \notag
    \\&\qquad
    +\frac{4\signof(\eta)}{1+\eta^2}\Bigl(\dot\eta\str(t)\bigl(1-\zeta(\dot \eta\str(t);0,\Delta_{\dot\eta\str})\bigr)\Bigr)
    \notag
    \\&\qquad
    -\frac{4\signof(\eta)}{(1+\eta^2)}\frac{\Omega_\a(t,\eta)}{\|\a\str(t)\|}\biggr) \zeta(\frac{|\eta(t)|}{\eta\str(t)};\varrho,1)
    \notag
    \\&\qquad
    - \frac{4 k_{\eta}\eta}{1+\eta^2} \zeta(\|\a\str(t)\|;\a_0,\a_1)
    \label{eq:defn_of_omega_star} \vspace{-0.17cm}
\end{align} 
whenever $\a\str(t)\neq0$ with $\omega\str(t,\eta)=0$ otherwise, for some $\Delta_{\dot\eta\str}\in(0,\infty)$, $\varrho\in(0,1)$, $k_{\eta}\in(0,\infty)$, $\a_0\in(0,\infty)$, and $\a_1\in(\a_0,\infty)$, where $\Omega_\a:\RR_{\geq0}\times\bar\RR\to\RR$ is yet to be specified.  The first term on the right of~\eqref{eq:defn_of_omega_star}, the one involving $\dot\a\str(t)$, is a feedforward term responsible for driving the pivot attitude $\lambda(t)$ toward the desired attitude $\lambda\str(t)$ and for speeding up the convergence of the rotational coordinate $\eta(t)$ to the set of target attitudes $N\str(t)$ whenever the magnitude of $\a\str(t)$ is growing rapidly.  The second term in~\eqref{eq:defn_of_omega_star}, the one involving $\dot\eta\str(t)$, is responsible for keeping $\eta(t)$ inside of $N\str(t)$.  The third term in~\eqref{eq:defn_of_omega_star}, the one involving $\Omega_\a$, is the one primarily responsible for driving $V_\a$ to zero in the subsequent Lyapunov analysis.  The fourth and final term in~\eqref{eq:defn_of_omega_star} keeps $\eta(t)$ small once it reaches $N\str(t)$.  

In conventional backstepping designs for multirotors~\cite{lee_geometric_2010,lee_control_2011,lee_geometric_2011,lee_nonlinear_2013,gamagedara_geometric_2019,martins_global_2021,martins_global_2024,flores_robust_2025}, the angular velocity feedforward terms present a singularity at $\a\str=0$.  The proposed design does not suffer this issue.  Suppose that $\a\str\to0$.  Then before $\a\str=0$, the target acceleration must pass through the neighborhood of $0$ where $\|\a\str\|\in(0,\tfrac{r}{2\pi}]$.  Within this interval, $N\str(t)=\RR$ and $\eta\str(t)=\infty$.  It follows that $|\eta(t)|/\eta\str(t)\leq\varrho$ so $\zeta(|\eta(t)|/\eta\str(t);\varrho,1)=0$.  Hence, the first three terms in~\eqref{eq:defn_of_omega_star} smoothly transition to zero whenever $\|\a\str\|$ enters $[0,\tfrac{r}{2\pi}]$.  Colloquially, these three control actions are \textit{turned off} whenever $\|\a\str\|\leq\tfrac{r}{2\pi}$ which ensures that the free-fall singularity does not cause $\omega\str(t,\eta(t))$ to become discontinuous along any trajectory.  

The Lyapunov function is specifically designed to accommodate the three terms being smoothly switched on and off.  The switching has no direct impact on $\dot{V}_{\a}$ as \vspace{-0.17cm}
\begin{equation}
    \begin{aligned}
        \zeta(|\eta|/\eta\str&;\varrho,1) \, \zeta(\|\a\str\|(|\eta| \!-\! \eta\str);0,\Delta_\a) \\&= \zeta(\|\a\str\|(|\eta| \!-\! \eta\str);0,\Delta_\a) 
    \end{aligned} \label{eq:vanishing_step} \vspace{-0.17cm}
\end{equation}
since $\zeta(|\eta|/\eta\str;\varrho,1)=1$ whenever $\zeta(\|\a\str\|(|\eta| \!-\! \eta\str);0,\Delta_\a)$ is nonzero.  The identity~\eqref{eq:vanishing_step} enables~\eqref{eq:defn_of_omega_star} to avoid singularities at $\a\str(t)=0$ while ensuring the following desirable evolution of ${V}_{\a}$. Adding and subtracting $\omega\str$ on the right hand side of~\eqref{eq:ddt_V_eta_1} then simplifying reveals \vspace{-0.17cm}
\begin{align}
    \dot V_\a &= - \Bigl(\Omega_\a + \dot \eta\str\,\|\a\str\|\,\zeta(\dot \eta\str;0,\Delta_{\dot\eta\str}) 
    \label{eq:ddt_V_eta_2}
    \\ & \quad 
    + k_\eta |\eta| \|\a\str\| \, \zeta(\|\a\str\|;\a_0,\a_1)  \Bigr)  \zeta(\|\a\str\|(|\eta|\!-\!\eta\str);0,\Delta_\a) 
    \notag
    \\ & 
    + \signof(\eta)\bigl(\tfrac{1+\eta^2}{4}\bigr) \|\a\str\| \zeta(\|\a\str\|(|\eta|\!-\!\eta\str);0,\Delta_\a) 
    \bigl( \omega - \omega\str \bigr). \notag \\[-0.65cm] \notag
\end{align} 
To ensure $V_\a\to0$, take $k_\a >0$ and prescribe \vspace{-0.17cm}
\begin{equation}
    \hspace{-0.17cm}
    \Omega_\a(t,\eta) = \left\{ 
    \matl{
    k_\a  \frac{Z(\|\a\str(t)\|(|\eta| - \eta\str(t));0,\Delta_\a)}{\zeta(\|\a\str(t)\|(|\eta| - \eta\str(t));0,\Delta_\a)}, & \!\!\!\!V_\a(t,\eta)>0 \\
    0, & \!\!\!\!\text{otherwise}
    }.
    \right. \hspace{-0.1cm} \label{eq:defn_of_Omega_eta} \vspace{-0.17cm}
\end{equation}
which is smooth and nonnegative.  With this choice, the term involving $\Omega_\a$ in~\eqref{eq:ddt_V_eta_2} becomes \vspace{-0.17cm}
\begin{equation}
    -\Omega_\a \, \zeta(\|\a\str\|(|\eta|\!-\!\eta\str);0,\Delta_\a)  = -k_\a  V_\a. \vspace{-0.17cm}
\end{equation}

Expanding the scope of the backstepping design to the entire inner loop, consider the candidate Lyapunov function \vspace{-0.17cm}
\begin{equation}
    V(t,\eta,\omega) = V_\a(t,\eta) + \underbrace{\frac{1}{2} p_{\omega} (\omega-\omega\str(t,\eta))^2}_{\eqqcolon \, V_\omega(t,\eta,\omega)}, \vspace{-0.17cm}
\end{equation}
where $p_\omega>0$.  Omitting arguments of time, the derivative of $V$ along trajectories is \vspace{-0.17cm}
\begin{align}
    \dot V 
    &= \dot V_\a + p_{\omega} (\omega-\omega\str)(u_2 - \dot\omega\str)
    \label{eq:ddt_V_1} \\ 
    &= -k_\a V_\a - \Bigl(\dot \eta\str\|\a\str\|\zeta(\dot \eta\str;0,\Delta_{\dot\eta\str}) 
    \notag
    \\ & \qquad 
    + k_\eta |\eta| \|\a\str\| \, \zeta(\|\a\str\|;\a_0,\a_1)  \Bigr)  \zeta(\|\a\str\|(|\eta|\!-\!\eta\str);0,\Delta_\a) 
    \notag \\ & \quad 
    + p_{\omega} (\omega-\omega\str)\Bigr(u_2 - \dot\omega\str
    \label{eq:ddt_V_3} 
    \\ & \qquad 
    + p_{\omega}^{-1}\signof(\eta)\bigl(\tfrac{1+\eta^2}{4}\bigr) \|\a\str\| \zeta(\|\a\str\|(|\eta|\!-\!\eta\str);0,\Delta_\a) \Bigr).
    \notag \\[-0.65cm] \notag
\end{align}
Set $k_\omega >0$ and prescribe \vspace{-0.17cm}
\begin{equation}
    \begin{aligned}
        u_2(t) &= \dot\omega\str(t,\eta(t),\omega(t)) - \tfrac{1}{2} k_{\omega} (\omega(t)-\omega\str(t,\eta(t))) 
        \\ & \quad 
        -p_{\omega}^{-1}\signof(\eta(t))\bigl(\tfrac{1+\eta(t)^2}{4}\bigr)\|\a\str(t)\| 
        \\ & \qquad 
        \cdot \, \zeta(\|\a\str(t)\|(|\eta(t)|\!-\!\eta\str(t));0,\Delta_\a).
    \end{aligned} \label{eq:u2_prescription} \vspace{-0.17cm}
\end{equation}
Combining~\eqref{eq:ddt_V_3} with~\eqref{eq:u2_prescription} and simplifying reveals \vspace{-0.17cm}
\begin{equation}
    \dot V \leq -k_\a V_\a - k_{\omega}V_\omega \leq -k V, \label{eq:sufficient_for_attractivity} \vspace{-0.17cm}
\end{equation}
where $k=\min\{k_\a ,k_{\omega}\}$.  Applying the Comparison Lemma to this result yields the following theorem.  

\begin{theorem} \label{thrm:a_to_A}
    Under the action of the control law defined in~\eqref{eq:u1_prescription} and~\eqref{eq:u2_prescription}, $\setA(t)$ globally attracts $\a(t)=\lambda(t)u_1(t)$.  
\end{theorem}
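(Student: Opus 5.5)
The argument runs through the dissipation inequality~\eqref{eq:sufficient_for_attractivity}, the Comparison Lemma, and the overestimate~\eqref{eq:overestimate_goal}.

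\emph{Step 1: the inequality holds along the whole trajectory.} We must justify that~\eqref{eq:sufficient_for_attractivity} is valid for almost every $t$ along any closed-loop solution, including across the intervals where $\a\str(t)=0$ and $\eta$ is undefined. On such intervals we adopt the convention that $N\str(t)=\bar\RR$, so $V_\a=0$ and $\omega\str=0$. The key observation is that $V_\a$, $\omega\str$ and $u_2$ are smooth in $(t,\eta)$ on a neighborhood of the set where $\|\a\str\|\le r/(2\pi)$. There $\eta\str=\infty$, so the gating factor $\zeta(|\eta|/\eta\str;\varrho,1)$ and the factor $\zeta(\|\a\str\|(|\eta|-\eta\str);0,\Delta_\a)$ both vanish identically. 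Hence only the term $-4k_\eta\eta(1+\eta^2)^{-1}\zeta(\|\a\str\|;\a_0,\a_1)$ could survive, and it is also switched off once $\|\a\str\|<\a_0$, provided $\a_0$ is chosen no larger than $r/(2\pi)$. I would state that parameter restriction explicitly if it is needed.

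With this in place, $t\mapsto V(t,\eta(t),\omega(t))$ is absolutely continuous. On each interval where $\a\str\ne0$, the derivative computation~\eqref{eq:ddt_V_1}--\eqref{eq:ddt_V_3} together with~\eqref{eq:u2_prescription} gives
\[
\dot V\le -k_\a V_\a - k_\omega V_\omega ,
\]
where the leftover terms are dropped because they are nonnegative. Indeed, $\dot\eta\str\,\zeta(\dot\eta\str;0,\Delta_{\dot\eta\str})\ge0$ since $\zeta(\cdot;0,\cdot)$ vanishes for negative arguments, and $k_\eta|\eta|\|\a\str\|\zeta\ge0$. Where $\a\str=0$, we have $V_\a=0$ and $\dot V_\omega=-k_\omega V_\omega$ directly.

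\emph{Step 2: the switching rule.} The switching rule of Section~\ref{subsec:switching_condition} may replace $\eta$ by its shadow ${\rm sh}(\eta)$. I would check that such a switch never increases $V$. Since $|{\rm sh}(\eta)|=|\eta|^{-1}$, and switching is presumably only performed when $|\eta|>1$, the value of $V_\a$ decreases. For $V_\omega$, I would either verify that it is unchanged or rely on the rule being designed so that $V$ is nonincreasing at jumps. This is the step I expect to be the main obstacle, because it depends on the switching condition that appears later in the paper. If needed, I would simply assume the switching rule guarantees $V(t^+)\le V(t^-)$ with finitely many switches on compact intervals.

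\emph{Step 3: conclusion.} Given Steps 1 and 2, the Comparison Lemma yields
\[
V(t)\le V(t_0)e^{-k(t-t_0)},
\]
so $V_\a(t,\eta(t))\le V(t)\to0$. By~\eqref{eq:overestimate_goal},
\[
\|\a(t)\|_{\setA(t)}\le\mu\bigl(V_\a(t,\eta(t))\bigr)\le \mu\bigl(V(t_0)e^{-k(t-t_0)}\bigr)\to0 .
\]
Two justifications complete the argument. First, \eqref{eq:overestimate_goal} holds trivially when $\a\str(t)=0$, since $\a=0\in\setA(t)$ by~\eqref{eq:u1_prescription}. Second, it holds when $V_\a=0$, since then $|\eta|\le\eta\str$, which gives $\theta\in\Theta\str$ and hence $\a\in\setA$ by Corollary~\ref{corollary:theta_star}.

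Because $V(t_0)$ is finite for every initial condition, this establishes global attractivity of $\setA(t)$. It does not establish stability, which is consistent with the unwinding discussion earlier in the paper.
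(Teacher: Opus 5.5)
Your argument is the paper's own proof: $\dot V\le -kV$ from the backstepping computation, the Comparison Lemma, $V_\a\le V$, and the overestimate $\|\a\|_{\setA}\le\mu(V_\a)$, followed by letting $t\to\infty$. The switching step you flag as the main obstacle needs no extra assumption, and no restriction $\a_0\le r/(2\pi)$: the rule~\eqref{eq:switching_condition} acts only when $\|\a\str\|<\min\{\tfrac{r}{2\pi},\a_0\}$, and your own Step 1 shows that there $\eta\str=\infty$, $V_\a=0$ and $\omega\str=0$, so $V=\tfrac12 p_\omega\omega^2$ does not depend on $\eta$ and is unaffected by the reset (which is the paper's remark that switching occurs only where the control law is independent of $\eta$).
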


\begin{proof}
    Equation~\eqref{eq:sufficient_for_attractivity} and the Comparison Lemma~\cite{khalil_nonlinear_2002} together imply that $V$ decays exponentially to zero: \vspace{-0.17cm}
    \begin{equation}
        V(t,\eta(t),\omega(t))\leq V(T,\eta(T),\omega(T)) e^{-k(t-T)}
        \label{eq:decaying_bound_on_V} \vspace{-0.17cm}
    \end{equation}
    for all $T\geq0$.  Recalling~\eqref{eq:overestimate_goal} and that $V_\a\leq V$, \vspace{-0.17cm}
    \begin{equation}
        \|\a(t)\|_{\setA(t)}\leq \mu\bigl(V(T,\eta(T),\omega(T)) e^{-k(t-T)}\bigr) 
        \label{eq:decaying_bound_on_set_dist} \vspace{-0.17cm}
    \end{equation}
    for all $T\geq0$.  Selecting $T=0$ and taking $t\to\infty$ shows that $\|\a(t)\|_{\setA(t)}\to0$, so $\a(t)\to\setA(t)$ globally.
\end{proof}

\subsubsection{MRP Switching Condition} \label{subsec:switching_condition}

Before analyzing the behavior of the outer-loop system in light of the established results, this section introduces the MRP switching condition for the control law.  Formal notation for hybrid systems~\cite{goebel_hybrid_2009} is omitted for brevity. 

To accommodate the possibility of $\eta(t)$ becoming undefined at $\a\str(t)=0$, the control law is designed to become independent of $\eta(t)$ for $\|\a\str(t)\|<\min\{\tfrac{r}{2\pi},\a_0\}$.  Suppose $\a\str(t)$ crosses through zero.  While $\|\a\str(t)\|<\min\{\tfrac{r}{2\pi},\a_0\}$, the behavior of the control law is unaffected.  The value of $\eta(t)$ after the crossing is, however, ambiguous since the 1D MRPs are a double covering of $\SS^1$.  To resolve this ambiguity, we apply the switching condition: \vspace{-0.17cm}
\begin{equation}
    \begin{aligned}
        \|\a\str(t)\| \geq \min\{\tfrac{r}{2\pi},\a_0\} \,\, &\Rightarrow \,\, \text{$\dot\eta(t)$ given by~\eqref{eq:defn_of_ddt_eta}}
        \\
        \|\a\str(t)\|\in(0,\min\{\tfrac{r}{2\pi},\a_0\}) \,\, &\Rightarrow \,\, \eta(t) = \tan(\tfrac{1}{4}\theta(t)). 
    \end{aligned}
    \label{eq:switching_condition} \vspace{-0.17cm}
\end{equation}
This switching condition does not cause any discontinuous control action since it occurs only when the control law has no dependence on $\eta(t)$.  It has the added benefit of discouraging unwinding since the $\tan(\tfrac{1}{4}\theta(t))$ branch of~\eqref{eq:defn_of_eta} is the one closer to the origin.

\subsection{Long-term Behavior of the Composite System} \label{subsec:long_term}

Consider the behavior of the inner- and outer-loop systems when they are connected to one another.  Set $\a\str(t) = \kappa_{ x\str}(t,e(t))$ and define $\Delta\a:\RR_{\geq0}\to\RR^2$ according to \vspace{-0.17cm}
\begin{equation}
    \Delta\a(t)= \a(t)-\a\str(t). \vspace{-0.17cm}
\end{equation}
It follows that the first row of~\eqref{eq:olsys} can be rewritten in terms of $e(t) =  x(t)- x\str(t)$ as \vspace{-0.17cm}
\begin{equation}
    \begin{aligned}
        \dot e(t) &= f_{ x\str}(t,e(t)) + \g_{ x\str}(t,e(t))\kappa_{ x\str}(t,e(t)) 
        \\
        &\qquad 
        + \g_{ x\str}(t,e(t)) \Delta\a(t).
    \end{aligned} \label{eq:error_outer_clsys} \vspace{-0.12cm}
\end{equation}
The form of~\eqref{eq:error_outer_clsys} matches~\eqref{eq:error_surrogate_olsys} with $u_\a(t) = \kappa_{ x\str}(t,e(t))$ and $\w(t) = \g_{ x\str}(t,e(t)) \Delta\a(t)$.  The ISS property~\eqref{eq:outer_clsys_is_pre_ISS} of the outer-loop system~\eqref{eq:surrogate_olsys}, therefore, implies that for any $T\in[0,\infty)$,\vspace{-0.17cm}
\begin{equation}
    \hspace{-0.17cm}
    \|e(t)\| \leq \beta(\|e(T)\|,t\!-\!T) + \gamma({\sup}_{\tau\geq T} \, \bar\g \|\Delta\a(\tau)\|) \hspace{-0.1cm} \label{eq:using_ISS} \vspace{-0.17cm}
\end{equation}
for all $t\in [T,\infty)$, where $\bar\g=\sup_{t\in\RR_{\geq0},x\in\RR^n}\|\g(t, x)\|$. Theorem~\ref{thrm:a_to_A} provides a bound on $\|\Delta\a(t)\|\leq r+\|\a(t)\|_{\setA(t)}$.  This bound is applied to strengthen the ISS-type estimate~\eqref{eq:using_ISS} in the two following theorems.  

\begin{theorem} \label{thrm:asymptotic_stabilization}
    Under~\eqref{eq:u1_prescription},~\eqref{eq:u2_prescription}, for any $T\in[0,\infty)$ the ball\vspace{-0.17cm}
    \begin{equation}
        \calB_T=\gamma\bigl(\bar\g r + \bar\g \mu(V(T,\eta(T),\omega(T)))\bigr)\bar\BB^n \vspace{-0.17cm}
    \end{equation}
    is global asymptotic stability (GAS) on time interval $[T,\infty)$.
\end{theorem}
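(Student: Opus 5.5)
The plan is to combine the ISS estimate~\eqref{eq:using_ISS} with the uniform-in-time bound on $\|\Delta\a\|$ supplied by the proof of Theorem~\ref{thrm:a_to_A}. Fix $T\ge0$. By the triangle inequality, since $\setA(t)=\a\str(t)+r\bar\BB^2$, we have $\|\Delta\a(\tau)\|\le r+\|\a(\tau)\|_{\setA(\tau)}$. By~\eqref{eq:decaying_bound_on_set_dist}, and because $\mu$ is nondecreasing and $e^{-k(\tau-T)}\le1$, this gives for every $\tau\ge T$
\begin{equation*}
\|\Delta\a(\tau)\|\le r+\mu\bigl(V(T,\eta(T),\omega(T))\bigr).
\end{equation*}
Taking the supremum over $\tau\ge T$ and using monotonicity of $\gamma\in\calK_\infty$, the estimate~\eqref{eq:using_ISS} becomes
\begin{equation*}
\|e(t)\|\le\beta(\|e(T)\|,t-T)+\gamma\bigl(\bar\g r+\bar\g\,\mu(V(T,\eta(T),\omega(T)))\bigr)
\end{equation*}
for all $t\ge T$. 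The constant on the right is exactly the radius $\rho_T$ of $\calB_T$.

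From this single inequality I will extract both parts of GAS of $\calB_T$ on $[T,\infty)$. Because $\|e\|_{\calB_T}=\max\{\|e\|-\rho_T,0\}$, we obtain $\|e(t)\|_{\calB_T}\le\beta(\|e(T)\|,t-T)$. The right-hand side tends to zero as $t\to\infty$, which gives global attractivity. For stability, I note that $\|e(T)\|\le\|e(T)\|_{\calB_T}+\rho_T$. The resulting bound $\beta(\|e(T)\|_{\calB_T}+\rho_T,\,t-T)$ is not a class-$\calK$ function of the distance at $T$, so I will sharpen the estimate using the ultimate decay instead.

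The main obstacle is producing a genuine $\calK\calL$ bound in $\|e(T)\|_{\calB_T}$. The cleanest route is to invoke the standard equivalence of ISS with the asymptotic-gain property together with $0$-GS (Sontag--Wang), or alternatively an ISS-Lyapunov characterization, in which the effect of a constant-norm input enters through a ball of radius $\gamma(\cdot)$. That characterization yields $\|e(t)\|_{\calB_T}\le\tilde\beta(\|e(T)\|_{\calB_T},t-T)$ for a suitable $\tilde\beta\in\calK\calL$.

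If I cannot obtain a sharp version of that characterization, I will fall back on interpreting GAS in the sense the paper appears to intend: uniform boundedness of $e$ on $[T,\infty)$, with the bound $\beta(\|e(T)\|,0)+\rho_T$, together with global attractivity of $\calB_T$. I would state this interpretation explicitly. Finally, I will remark that the radius $\rho_T$ is nonincreasing in $T$ along solutions because $V$ decays, so letting $T\to\infty$ shrinks $\calB_T$ toward $\gamma(\bar\g r)\bar\BB^n$. This sets up the PCT conclusion with $\Delta=\gamma(\bar\g r)$.
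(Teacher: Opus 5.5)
Everything up to $\|e(t)\|_{\calB_T}\le\beta(\|e(T)\|,t-T)$ is exactly the paper's proof. It uses the same bound $\|\Delta\a(\tau)\|\le r+\|\a(\tau)\|_{\setA(\tau)}$, the same use of~\eqref{eq:decaying_bound_on_set_dist} with $e^{-k(\tau-T)}\le1$ to produce the radius $\rho_T$, and the same identity $\|e\|_{\calB_T}=\max\{0,\|e\|-\rho_T\}$. The paper stops at that estimate and cites~\cite{chaillet_uniform_2008}. In that sense, GAS of a ball \emph{means} a $\calK\calL$ bound on the distance to the ball whose first argument is $\|e(T)\|$, not $\|e(T)\|_{\calB_T}$. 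That notion already contains the bound $\|e(t)\|\le\beta(\|e(T)\|,0)+\rho_T$ and the uniform attractivity you list. So what you call a fallback is the intended statement, and your argument is complete once you adopt it.

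The step you call the cleanest route is a genuine error and should be dropped rather than repaired. A bound $\|e(t)\|_{\calB_T}\le\tilde\beta(\|e(T)\|_{\calB_T},t-T)$ with $\tilde\beta\in\calK\calL$ forces $\calB_T$ to be forward invariant: start on $\partial\calB_T$, where the right-hand side vanishes. ISS with the given gain $\gamma$ does not make the Euclidean ball of radius $\gamma(d)$ forward invariant for disturbances with $\|\w\|\le d$. The Sontag--Wang characterization (asymptotic gain plus $0$-GS) says nothing about invariance of such a ball. An ISS-Lyapunov function only gives invariant sublevel sets, which sit inside a larger ball with a different gain, not $\calB_T$.

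For a concrete failure, take the error dynamics of Section~\ref{sec:Simulation}. The closed-loop matrix $A_{\rm cl}$ there has an indefinite symmetric part whenever $k_\x\neq1$. The only information you use about $\Delta\a$ is an upper bound, which is consistent with $\Delta\a\equiv0$. In that case a solution started on $\partial\calB_T$ in a direction with $e\tt A_{\rm cl}e>0$ leaves $\calB_T$ immediately. By linearity its excursion scales with $\rho_T$ while its initial distance is zero, so no such $\tilde\beta$ can exist in general.
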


\begin{proof}
    Consider~\eqref{eq:using_ISS}.  From $\|\Delta\a(\tau)\|\leq r+\|\a(\tau)\|_{\setA(\tau)}$ and~\eqref{eq:decaying_bound_on_set_dist}, the second term on the right is bounded by \vspace{-0.17cm}
    \begin{align}
        \hspace{-0.17cm}
        \gamma({\sup}_{\tau\geq T} \, \bar\g \|\Delta\a(\tau)\|) 
        &\leq 
        \gamma(\bar\g r \!+\! {\sup}_{\tau\geq T} \, \bar\g \mu(V_T e^{-k(\tau\!-\!T)}))
        \hspace{-0.17cm}
        \notag
        \\&=
        \gamma(\bar\g r + \bar\g \mu(V_T)) \\[-0.65cm] \notag
    \end{align}
    for any $T\in[0,\infty)$, where $V_T=V(T,\eta(T),\omega(T))$.  
    Hence,\vspace{-0.17cm}
    \begin{equation}
        \|e(t)\| \leq \beta(\|e(T)\|,t\!-\!T) +\gamma(\bar\g r + \bar\g \mu(V_T)) \vspace{-0.17cm}
        \label{eq:did_not_expect_to_reference_this_TWICE_haha}
    \end{equation}
    from~\eqref{eq:using_ISS}, for all $t\in[T,\infty)$, for any $T\in[0,\infty)$.  Note that  \vspace{-0.17cm}
    \begin{equation}
        \|e(t)\|_{\calB_T}=\max\{0,\|e(t)\|-\gamma(\bar\g r + \bar\g \mu(V_T))\}. \vspace{-0.17cm}
    \end{equation}
    Then subtracting $\gamma(\bar\g r + \bar\g \mu(V_T))$ from both sides of~\eqref{eq:did_not_expect_to_reference_this_TWICE_haha} and using the fact that $0\leq\beta(\|e(T)\|,t\!-\!T)$, it follows that \vspace{-0.17cm}
    \begin{equation}
        \|e(t)\|_{\calB_T} \leq \beta(\|e(T)\|,t\!-\!T) \label{eq:yet_another_reference} \vspace{-0.17cm}
    \end{equation}
    for all $t\in[T,\infty)$, for any $T\in[0,\infty)$. The estimate~\eqref{eq:yet_another_reference} proves GAS of $\calB_T$ on the time interval $[T,\infty)$~\cite{chaillet_uniform_2008}.  
\end{proof}

\begin{theorem} \label{thrm:asymptotic_tracking}
    Under~\eqref{eq:u1_prescription},~\eqref{eq:u2_prescription}, the outer-loop error is ultimately bounded according to $\limsup{}_{t\to\infty} \|e(t)\| \leq \gamma(\bar\g r)$. 
\end{theorem}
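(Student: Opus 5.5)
The plan is to start from the estimate~\eqref{eq:did_not_expect_to_reference_this_TWICE_haha} in the proof of Theorem~\ref{thrm:asymptotic_stabilization}. For every $T\in[0,\infty)$ and every $t\geq T$ it gives
\begin{equation*}
\|e(t)\|\leq\beta(\|e(T)\|,t-T)+\gamma\bigl(\bar\g r+\bar\g\mu(V_T)\bigr),
\end{equation*}
where $V_T=V(T,\eta(T),\omega(T))$. The idea is to let $t\to\infty$ with $T$ fixed, which removes the $\calK\calL$ term. We then let $T\to\infty$, and the exponential decay of $V$ removes the $\mu(V_T)$ contribution inside $\gamma$.

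\textbf{Step 1 (fixed $T$).} Fix $T$. Since $\beta\in\calK\calL$, $\beta(\|e(T)\|,t-T)\to0$ as $t\to\infty$, so
\begin{equation*}
\limsup_{t\to\infty}\|e(t)\|\leq\gamma\bigl(\bar\g r+\bar\g\mu(V_T)\bigr).
\end{equation*}
This needs $\|e(T)\|$ to be finite. I would argue that this holds because~\eqref{eq:did_not_expect_to_reference_this_TWICE_haha} with $T=0$ already bounds $e$ on $[0,\infty)$, since $V_0$ is finite for any well-defined initial condition.

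\textbf{Step 2 ($T\to\infty$).} The left-hand side above does not depend on $T$, so we may take the infimum over $T$ on the right. By~\eqref{eq:decaying_bound_on_V} with base time $0$, $V_T\leq V_0e^{-kT}\to0$. Because $\mu\in\calK_\infty$ and $\gamma\in\calK_\infty$ are continuous and increasing with $\mu(0)=0$, it follows that $\gamma(\bar\g r+\bar\g\mu(V_T))\to\gamma(\bar\g r)$. Hence $\limsup_{t\to\infty}\|e(t)\|\leq\gamma(\bar\g r)$, which is the claim. For an $\epsilon$-argument instead, choose $T$ large enough that $\gamma(\bar\g r+\bar\g\mu(V_T))<\gamma(\bar\g r)+\epsilon$.

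\textbf{Main obstacle.} The only delicate point is that the bound $V_T\leq V_0e^{-kT}$ must remain valid across the MRP switching events~\eqref{eq:switching_condition}, since $\eta$ may jump there. I would handle this by noting two facts. First, switches occur only while $\|\a\str\|<\min\{\tfrac{r}{2\pi},\a_0\}$. In that regime $\eta\str=\infty$, so $V_\a=0$ (the ramp $Z$ vanishes) and $\omega\str$ is independent of $\eta$. Consequently $V$ does not jump at a switch, and~\eqref{eq:decaying_bound_on_V} holds across switching instants. Second, Theorem~\ref{thrm:a_to_A} gives the decay with an arbitrary base time, so if needed one can simply restart the comparison argument after each switch. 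Apart from this point, the proof is a limit computation using continuity of class-$\calK_\infty$ functions.
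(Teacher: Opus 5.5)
Your proposal is correct and follows the paper's proof essentially step for step. You fix $T$ and let $t\to\infty$ in the estimate from Theorem~\ref{thrm:asymptotic_stabilization} to remove the $\beta$ term, then use $V_T\leq V_0e^{-kT}$ and continuity of $\gamma$ and $\mu$ to let $T\to\infty$; your added remarks on the finiteness of $\|e(T)\|$ and on $V$ not jumping at MRP switches (because there $\eta\str=\infty$, $V_\a=0$, and $\omega\str$ is independent of $\eta$) are sound refinements that the paper leaves implicit.
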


\begin{proof}
    Consider~\eqref{eq:did_not_expect_to_reference_this_TWICE_haha}.  Taking the limit supremum as $t\to\infty$ drives the $\calK\calL$ term to zero, leaving \vspace{-0.17cm}
    \begin{equation}
        \hspace{-0.17cm}
        \limsup_{t\to\infty} \|e(t)\| 
        \leq \gamma(\bar\g r + \bar\g \mu(V_T))  \quad \forall \, T\in[0,\infty). \vspace{-0.17cm}
    \end{equation}
    Let $V_0=V(0,\eta(0),\omega(0))$.  Using $V_T\leq V_0 e^{-kT}$ from~\eqref{eq:decaying_bound_on_V}, \vspace{-0.1cm}
    \begin{equation}
        \hspace{-0.15cm}
        \limsup_{t\to\infty} \|e(t)\| 
        \leq \gamma\bigl(\bar\g r+\bar\g\mu(V_0 e^{-kT}) \bigr) \quad \forall \, T\in[0,\infty).
        \label{eq:ultimate_bound_with_T} \vspace{-0.17cm}
    \end{equation}
    Since the right hand side of~\eqref{eq:ultimate_bound_with_T} is continuous in $T$ and it holds for all $T\in[0,\infty)$,~\eqref{eq:ultimate_bound_with_T} must hold as $T\to\infty$. Hence,\vspace{-0.1cm}
    \begin{equation}
        \hspace{-0.15cm}
        \limsup_{t\to\infty} \|e(t)\| 
        \leq \lim_{T\to\infty}\gamma\bigl(\bar\g r+\bar\g \mu(V_0 e^{-kT}) \bigr) = \gamma(\bar\g r)\! \vspace{-0.17cm}
    \end{equation}
    which is the claimed result.
\end{proof}

Theorems~\ref{thrm:asymptotic_stabilization} and~\ref{thrm:asymptotic_tracking} describe the transient and long-term behavior of the outer-loop system.  Theorem~\ref{thrm:asymptotic_stabilization} guarantees the error converges asymptotically to a neighborhood of the origin, and Theorem~\ref{thrm:asymptotic_tracking} guarantees that the stabilized set eventually converges to within an ultimate bound.  Theorem~\ref{thrm:asymptotic_tracking} achieves the control objective~\eqref{eq:control_objective} with \vspace{-0.17cm}
\begin{equation}
    \Delta = \gamma(\bar\g r). \vspace{-0.17cm}
\end{equation}
The size of $\Delta$ can be tuned to ensure practical utility by selecting an appropriate value of $r$, since $\gamma\in\calK_{\infty}$.   

Theorem~\ref{thrm:no_ECT} establishes the nonexistence of ECT controllers for systems of the form~\eqref{eq:olsys}.  Theorem~\ref{thrm:asymptotic_tracking} provides an affirmative answer to the question of existence of PCT controllers for the same class, formalized in the following corollary.  

\begin{corollary}
    Given Assumption~\ref{assmp:ISS_control_law}, there exists a PCT control law for all systems of the form~\eqref{eq:olsys} for any $\Delta>0$. 
\end{corollary}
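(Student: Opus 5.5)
The plan is to instantiate the construction of Section~\ref{sec:controller_design} with a suitably small tolerance $r$ and then apply Theorem~\ref{thrm:asymptotic_tracking}. Fix any $\Delta>0$.

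\emph{Choice of $r$.} By Assumption~\ref{assmp:ISS_control_law}, the control law $\kappa_{x\str}$ and gains $\beta\in\calK\calL$, $\gamma\in\calK_\infty$ exist uniformly over $X\str$. Since $\g$ is bounded, $\bar\g<\infty$. If $\bar\g=0$, which is excluded by the full-rank hypothesis, any $r$ would do. Otherwise $\gamma$ is continuous, strictly increasing, unbounded and satisfies $\gamma(0)=0$, so $\gamma^{-1}(\Delta)$ is a well-defined positive number. Choose
\begin{equation*}
r\in\bigl(0,\ \gamma^{-1}(\Delta)/\bar\g\,\bigr].
\end{equation*}
Then $\gamma(\bar\g r)\le\Delta$.

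\emph{The controller.} With this $r$, set $\a\str(t)=\kappa_{x\str}(t,e(t))$ and use $u_1$ from~\eqref{eq:u1_prescription}, $u_2$ from~\eqref{eq:u2_prescription}, and the switching rule~\eqref{eq:switching_condition}. The remaining design parameters ($\Delta_\a$, $\Delta_{\dot\eta\str}$, $\varrho$, $k_\eta$, $\a_0$, $\a_1$, $k_\a$, $k_\omega$, $p_\omega$) are chosen arbitrarily within their admissible ranges. Every ingredient is defined for every $x\str\in X\str$, so this is a controller parameterized by $x\str$ in the sense of Definition~\ref{def:ECT_and_PCT}.

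\emph{Conclusion.} Theorem~\ref{thrm:asymptotic_tracking} gives, for every $x\str\in X\str$ and every initial condition,
\begin{equation*}
\limsup_{t\to\infty}\|x(t)-x\str(t)\|\le\gamma(\bar\g r)\le\Delta .
\end{equation*}
This is~\eqref{eq:control_objective} with a positive bound, so the controller is PCT. Formally, if one reads the objective as achieving ``$\Delta$'' exactly, the bound $\le\Delta$ is already what~\eqref{eq:control_objective} requires.

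\emph{Main obstacle.} The delicate step is confirming that the hypotheses behind Theorems~\ref{thrm:a_to_A} and~\ref{thrm:asymptotic_tracking} hold for every $x\str$ and every initial condition. Two points need checking:
\begin{itemize}
\item $\a\str=\kappa_{x\str}(t,e(t))$ must be differentiable enough for $\dot\a\str$ and $\dot\omega\str$ to exist. This follows from smoothness of $\kappa_{x\str}$ and $e\in C^1$.
\item Solutions must exist on $[0,\infty)$, given the standing assumption that intervals with $\a\str\neq0$ have nonzero measure.
\end{itemize}
I would take these as inherited from the earlier analysis, since the corollary only re-reads Theorem~\ref{thrm:asymptotic_tracking}. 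The only new content is the inversion of $\gamma$, which is immediate because $\gamma\in\calK_\infty$.
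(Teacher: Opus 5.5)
Your proposal is correct and matches the paper's proof. The paper fixes $\Delta>0$, takes exactly $r=\bar\g^{-1}\gamma^{-1}(\Delta)$ (the right endpoint of your admissible interval), observes that this $r$ is positive, and invokes Theorem~\ref{thrm:asymptotic_tracking} to obtain $\limsup_{t\to\infty}\|e(t)\|\le\gamma(\bar\g r)=\Delta$; your well-posedness caveats are likewise left to the earlier analysis there.
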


\begin{proof}
    Fix $\Delta>0$ and select $r = \bar\g^{-1}\gamma^{-1}(\Delta)$.  This radius $r$ is compatible with the control law~\eqref{eq:u1_prescription},~\eqref{eq:u2_prescription} since it is positive.  Theorem~\ref{thrm:asymptotic_tracking} guarantees that applying~\eqref{eq:u1_prescription},~\eqref{eq:u2_prescription} with the selected $r$ ensures~\eqref{eq:control_objective}, the definition of PCT.  
\end{proof}

\section{SIMULATION EXAMPLE} \label{sec:Simulation}

The theoretical results are illustrated with a numerical simulation of a planar multirotor.  Physical units are omitted in the following discussion.  Let $\x:\RR_{\geq0}\to\RR^2$ denote position and $\v:\RR_{\geq0}\to\RR^2$ denote velocity.  The outer-loop state is $x(t) = [\,\x(t)\tt \,\,\, \v(t)\tt\,]\tt\in\RR^4$, and the outer loop evolves according to \vspace{-0.17cm}
\begin{equation}
    \bbm{\dot\x(t) \\ \dot\v(t)} = \underbrace{\bbm{0 & I \\ 0 & 0}\bbm{\x(t) \\ \v(t)} + \bbm{0\\ -g\e_2}}_{f(t,x(t))} + \underbrace{\bbm{0\\ I}}_{\mathclap{\g(t,x(t))}} \lambda(t) u_1(t) 
    \label{eq:example_systems} \vspace{-0.17cm}
\end{equation}
where $\lambda:\RR_{\geq0}\to\SS^1$ denotes the multirotor's attitude and $g=9.81$ represents gravity.  Let $A = \e_1\e_2\tt\otimes I$, $B = \e_2\otimes I$, $C = \e_1\tt\otimes I$, and $k = -g\, \e_2$.  The baseline controller applied to the outer loop is\vspace{-0.17cm}
\begin{equation}
    \kappa_{x\str}(t,e(t)) = (C A B)^{-1}(\ddot\x\str(t) \!-\!k\!-\! C A^2 x(t) \!-\! K e(t)) \vspace{-0.17cm}
\end{equation}
where $K=\bbm{k_\x & k_\v}\otimes I$.  With $\aa\str(t)=\kappa_{x\str}(t,e(t))$, the outer-loop evolves according to \vspace{-0.17cm}
\begin{equation}
    \dot e(t) = \bbm{0 & I \\ -k_\x I & -k_\v I}e(t) + B \Delta\a(t) \vspace{-0.17cm}
\end{equation}
which satisfies all assumptions of Section~\ref{sec:problem_formulation} for $k_\x,k_\v>0$.  

\begin{figure}[t!]
    \begin{subfigure}{0.5\textwidth}
        \centering
        \includegraphics[width=0.9\textwidth]{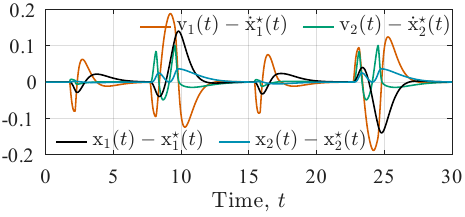}
        \vspace{-0.2cm}
        \caption{Time history of $e(t)$.}
        \label{fig:quad_error}
    \end{subfigure}\hfill 
    \begin{subfigure}{0.5\textwidth}
        \centering
        \includegraphics[width=0.9\textwidth]{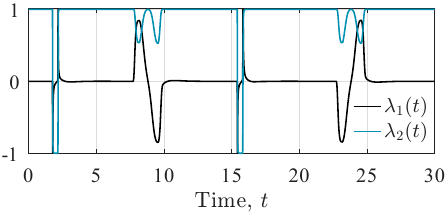}
        \vspace{-0.2cm}
        \caption{Time history of $\lambda(t)$.}
        \label{fig:quad_lambda}
    \end{subfigure}\hfill 
    \caption{Time histories from the multirotor simulation.}
    \label{fig:quad_fig}
    \vspace{-0.6cm}
\end{figure}

The simulation uses $k_\x = 3.1623$, $k_\v=4.0404$, $r=\frac{2\pi}{10}$, $\a_0=0.02$, $\a_1=0.03$, $\varrho=0.2$, $\Delta_{\a}=0.025$, $\Delta_{\dot\eta\str}=0.01$, $k_{\a}=5$, $k_{\eta}=5$, $p_{\omega}=5$, and $k_{\omega}=100$.  The target outer-loop history is $x\str(t) = [\,\x\str(t)\tt \,\,\, \dot\x\str(t)\tt\,]\tt$ wherein \vspace{-0.17cm}
\begin{equation}
    \begin{aligned}
        \x\str(t) &= 10(\zeta(t;0,3) - \zeta(t;14,17))\e_2 
        \\
        &\qquad + 10(\zeta(t;7,10) - \zeta(t;21,24))\e_1
    \end{aligned}
    \label{eq:target_motion} \vspace{-0.17cm}
\end{equation}
which traces out a square, starting from the bottom left corner and moving clockwise.  The target motion~\eqref{eq:target_motion} is intentionally aggressive to illustrate the benefits of the redesign.  

The time histories of $e(t)$ and $\lambda(t)$ for the multirotor are illustrated in Figure~\ref{fig:quad_fig}.  When the multirotor is flying along either of the vertical edges of the square, tracking the aggressive profile requires the vehicle to flip upside down to rotate its thrust vector.  In the figure, the multirotor is upside down whenever $\lambda_2(t)$ is negative.  Specifically, the aircraft flips because the thrust must compensate for the difference between $\sup_{t\in\RR_{\geq0}}\|\ddot\x\str(t)\|\approx15.75$ and $g=9.81$.  Current state-of-the-art multirotor controllers which admit the target position history fail to produce this behavior and instead encounter a division by zero on the first edge of the square due to the free-fall singularity.

\section{CONCLUSIONS} \label{sec:Conclusions}

In this paper, we presented a novel controller redesign for systems with pivoted unidirectional actuation that enables practical tracking with formal guarantees.  The results apply to a large class of systems including multirotors, rockets, hovercraft, ground vehicles, and ocean surface vehicles.  The contributions of this paper are twofold.  First, we proved the existence of output histories that cannot be tracked exactly by the considered class of systems.  Second, we proposed a controller redesign that enables all output histories to be tracked in a practical sense.  

There are many avenues for future work.  The results presented here can be extended to general gimbaled unidirectional actuation.  In practice, this extension is necessary for applications to vehicles that move in three dimensions such as multirotors.  The proposed control scheme can be modified to ensure ISpS to exogenous disturbances for the outer loop and pivot systems.  It is also worthwhile to consider the case where the unidirectional actuation is restricted to some finite interval of the positive reals.  Addressing the cases of nonzero lower-bounds and finite upper-bounds will help to close the gap between theory and practice.  Robustness guarantees in the form of strong integral ISpS guarantees can be developed for the case of finite control authority.



\begin{appendix}
    
\subsection{Smooth Step Function}
\label{appsec:zeta}

The paper uses several bespoke functions defined in this appendix.  Define $\varpi:\RR\to[0,1)$ by \vspace{-0.17cm}
\begin{equation}
    \varpi(s) = \left\{ \matl{0, & s\leq0 \\ \exp(-s^{-1}), & s>0} \right.. \vspace{-0.17cm}
\end{equation}
Note that $\varpi$ is smooth and all of its derivatives at zero are zero.  Define $\zeta_0:\RR\to[0,1]$ by \vspace{-0.17cm}
\begin{equation}
    \zeta_0(s) = \frac{\varpi(s)}{\varpi(s)+\varpi(1-s)}. \vspace{-0.17cm}
\end{equation}
This smooth step function has $\zeta_0(s)=0$ for all $s\leq0$ and $\zeta_0(s)=1$ for all $s\geq1$.  Define $\zeta:\RR\times\RR\times\RR\to[0,1]$ by \vspace{-0.17cm}
\begin{equation}
    \zeta(s;s_0,s_1) = \zeta_0\bigl(\frac{s-s_0}{s_1-s_0}\bigr). \vspace{-0.17cm}
\end{equation}
The function $\zeta(s;s_0,s_1)$ is a $C^{\infty}$ smooth step function.  Its argument is $s$ and $s_0,s_1$ are parameters. Respectively, $s_0$ and $s_1$ are the values of $s$ where the value of $\zeta$ becomes $0$ and $1$.  If $s_0=s_1$, then $\zeta$ is not defined.  If $s_0<s_1$, then $\zeta$ is zero to the left of the interval $(s_0,s_1)$ and is positive one to the right of it.  If $s_1<s_0$, then the value of $\zeta$ is positive one to the left of the interval $(s_1,s_0)$ and is zero to the right of it. It is possible to simplify equations like~\eqref{eq:defn_of_theta_star} with\vspace{-0.17cm}
\begin{equation}
    1-\zeta(s;s_0,s_1) = \zeta(s;s_1,s_0). \vspace{-0.17cm}
\end{equation}
In this paper, however, we avoid using this identity and instead always write $\zeta(s;s_0,s_1)$ with $s_0<s_1$ for clarity.

Finally, define $Z:\RR\times\RR\times\RR\to\RR_{\geq0}$ by \vspace{-0.17cm}
\begin{equation}
    Z(s;s_0,s_1) = \dintt{s_0}{s}{\zeta(\sigma;s_0,s_1)}{\sigma}. \vspace{-0.17cm}
\end{equation}
The function $Z(s;s_0,s_1)$ is a smooth ramp function based on $\zeta(s;s_0,s_1)$.  For $s_0<s_1$, $Z(s;s_0,s_1)=0$ for $s\leq s_0$ and $Z(s;s_0,s_1)=s - \tfrac{1}{2}(s_0+s_1)$ for $s\geq s_1$.

\subsection{Arctangent Identity}
\label{appsec:arctan_stuff}

\begin{lemma} \label{lemma:arctan_stuff}
    $a>b\Rightarrow a-b\geq\arctan(a)-\arctan(b)$.
\end{lemma}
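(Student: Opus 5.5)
The plan is to write the difference as an integral and use that the integrand never exceeds one. Since $\arctan$ is continuously differentiable on $\RR$ with $\frac{d}{ds}\arctan(s)=\frac{1}{1+s^2}$, the fundamental theorem of calculus gives, for $a>b$,
\begin{equation*}
    \arctan(a)-\arctan(b) = \int_b^a \frac{1}{1+s^2}\,\mathrm{d}s .
\end{equation*}
The key estimate is $0<\frac{1}{1+s^2}\leq 1$ for every $s\in\RR$, because $1+s^2\geq1$. Since $a>b$, the integration interval is correctly oriented, so integrating this pointwise bound over $[b,a]$ preserves the inequality and yields
\begin{equation*}
    \arctan(a)-\arctan(b)\leq\int_b^a 1\,\mathrm{d}s = a-b ,
\end{equation*}
which is the claim.

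A second route is the mean value theorem: there is some $c\in(b,a)$ with $\arctan(a)-\arctan(b)=\frac{a-b}{1+c^2}$. Because $a-b>0$ and $\frac{1}{1+c^2}\leq1$, this gives the same bound, and it avoids integrals altogether.

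No step is a real obstacle, since the lemma is just the statement that $\arctan$ is $1$-Lipschitz and increasing. The one point needing care is its use in~\eqref{eq:radially_unbounded_V_a}. There the lemma is applied to $|\eta(t)|$ and $\eta\str(t)$ with $|\eta(t)|>\eta\str(t)\geq0$, and in that regime $|\arctan(\eta(t))|=\arctan(|\eta(t)|)$ because $\arctan$ is odd.

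There are also two edge cases. If $\eta\str(t)=\infty$ or $\eta(t)=\pm\infty$, the bound in~\eqref{eq:radially_unbounded_V_a} holds trivially or is vacuous: when $\eta\str(t)=\infty$ we have $\a(t)\in\setA(t)$ and the left-hand side is zero. So it suffices to state the lemma for finite reals, which the argument above covers.
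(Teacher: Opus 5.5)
Your proof is correct and is essentially the paper's argument: the paper asserts that $x\mapsto x-\arctan(x)$ is nondecreasing and then subtracts, and that monotonicity is exactly your pointwise bound $\frac{1}{1+s^2}\leq 1$ integrated over $[b,a]$ (or applied through the mean value theorem). Your version simply makes explicit the derivative computation that the paper leaves unstated.
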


\begin{proof}
    $x\mapsto x-\arctan(x)$ is nondecreasing.  Therefore\vspace{-0.17cm}
    \begin{align}
        a > b &\Rightarrow a-\arctan(a) \geq b-\arctan(b)
        \\
        &\Leftrightarrow a-b \geq \arctan(a)-\arctan(b) \\[-0.65cm] \notag
    \end{align}
    which is the claimed result.
\end{proof}

\end{appendix}


\bibliographystyle{IEEEtran}
\bibliography{root}

\end{document}